\newcolumntype{C}{>{$\displaystyle}c<{$}}
\newcommand{\seq}[2]{\frac{\strut#1}{\strut#2}}
\newtheorem{defn}{Definition}
\newtheorem{lem}{Lemma}
\newtheorem{thm}{Theorem}
\newtheorem{prop}{Proposition}
\begin{document}


\title{Temporal Logic for Social Networks}
\author{Vitor Machado%
  \thanks{Electronic address: \texttt{vmachado@cos.ufrj.br}}}
\affil{System Engineering and Computer Science Program \\ Federal University of Rio de Janeiro}

\author{Mario Benevides%
  \thanks{Electronic address: \texttt{mario@ic.uff.br}}}
\affil{Computing Institute \\ Fluminense Federal University}

\date{\today}

\maketitle

\begin{abstract}
This paper introduces a logic with a class of social network models that is based on standard Linear Temporal Logic (LTL), leveraging the power of existing model checkers for the analysis of social networks. We provide a short literature overview, and then define our logic and its axiomatization, present some simple motivational examples of both models and formulas, and show its soundness and completeness via a translation into propositional formulas. Lastly, we briefly discuss model checking and time complexity analysis.
\end{abstract}

\section{Introduction}
Social networks are a growing topic of study since the Internet social network services boom starting in the late 1990's with networks such as Friendster, MySpace, LinkedIn, and others, and more recently Facebook.
These services allow people from all over the world to interact in meaningful ways, and as such, have gained tremendous importance in popular culture in recent years. They also impose some challenges, such as the spreading of false information, usually referred to as ``fake news''.

In a broader sense, these networks are commonly defined by a set of entities (individuals, communities or societies), relations between pairs of entities, and a set of social interactions. For instance, in Facebook, an entity is a person, user of the service; the relations between pairs of entities are the friendship relations and the social interactions are posts which appear on the user's timeline.

Our goal in this paper is to develop a logic based on LTL, denoted LTL-SN for social network models without unadoption. Semantics are standard LTL, but we are interested in models with social network properties of adoption, where an agent adopts the behavior when the majority of its friends have the behavior.

The roadmap is as follows: In the following subsection we briefly go over some of the literature that inspired or served as a basis for this work, providing an overview of the past and current tendencies in the study social networks with such frameworks. In chapter \ref{language-and-semantics} we introduce the language and its semantics, present some simple examples of model evolution and formulas, and in chapter \ref{axioms-system} we present some important definitions that lead on to the axiomatization and finally its soundness and completeness proofs in chapter \ref{soundness-and-completeness}. In chapter \ref{model-checking} we briefly discuss model checking and its time complexity, and then present our closing words and future works in chapter \ref{conclusion-future-works}.

\subsection{Literature overview}
Papers \cite{ReachingConsensus,WhenRationalDisagreementImpossible}, seminal works in social networks theory, deal with how a group of agents should reach consensus by assessing the probabilities that each agent assigns to a given proposition, and that informed disagreements are therefore irrational behavior.



A threshold-model is presented in \cite{10.1007/978-3-642-24829-0_20,DBLP:journals/corr/abs-1708-01477}, in which nodes influenced by their neighbours can adopt one out of several alternatives. It focuses on establishing conditions for adoption of a ``product'' by a node, and whether the process has a unique outcome or not, or an unavoidable one. Nodes have neighbours and are always aware of their product adoptions. Nodes have a threshold $\theta$, and will adopt a product if more than $\theta$ of its neighbours have adopted or not. The paper focuses strictly on these dynamics and outcome conditions, and does not provide a logic formalism.

In \cite{DBLP:journals/corr/abs-1708-01477} a different approach to threshold models is taken by constructing the dynamics using action models and product update. The paper proposes an action model approach to the updates, which works by applying the product between the network graph and a graph that encodes the decision rules, and is uniformly followed by all agents.

The paper \cite{ZHEN2011275} investigates the phenomenon of peer pressure, by introducing a model where a person's preferences are changed in response to the preferences of a ``peer group''. Relationships among agents are defined with the ``facebook logic'' (symmetric relation of ``friendship'', $\sim_w$, in a state $w$).

Private and public opinions are considered under the dynamics of social influence in \cite{ReflectingSocialInfluenceNetworks}. It provides rules for ``strong'' and ``weak'' influence, and also rules for when an agent will express its true inner opinions or a different external opinion. Agents are capable of evolving their epistemic knowledge to infer the inner opinions of others, by discarding states which are contradictory to the rules of social influence.

In \cite{DynamicEpistemicLogicsDiffusionPredictionSocialNetworks} a ``minimal'' threshold-model based logic is presented, in the sense that it does not use static modalities or hybrid logic for its diffusion process. The authors go on to introduce an epistemic version of this logic, where agents must have sufficient information about the behaviors of their neighbours before adopting the behavior, as standard in epistemic logics. This framework serves as the main inspiration for the work we present here.

\section{Language and semantics}
\label{language-and-semantics}
In this section we present the language of LTL-SN, define certain properties of its models and its semantics.

\begin{defn}[Language]
The language of LTL-SN is given by the following BNF:
    \begin{equation*}
        \varphi \vcentcolon = \top \mid N_{ab} \mid \beta_a  \mid \neg \varphi \mid \varphi_1 \land \varphi_2 \mid X \varphi \mid \varphi_1 U \varphi_2
    \end{equation*}
\end{defn}
where $a, b \in \mathcal{A}$,

We use the standard abbreviations $\bot \equiv \neg \top$, $\varphi \lor \phi \equiv \neg ( \neg \varphi \land \neg \phi)$, $\varphi \rightarrow \phi \equiv \neg( \varphi \land \neg \phi)$, $F \varphi \equiv \top U \varphi$ and $G \varphi \equiv \neg  F \neg \varphi$.

\begin{defn}[Model]
\label{defn-model}
A model is a tuple ${\cal M} = ( \mathcal{A}, N , \theta, I)$, where
\begin{itemize}
    \item  $\mathcal{A}$ is a finite  set of agents;
    \item $N$ is a neighborhood function $N: \mathcal{A} \mapsto 2^{\mathcal{A}}$ such that $N$ is irreflexive, serial and symmetric;
    \item $\theta \in [0,1]$ is a uniform adoption threshold;
    \item $I$ is an initial behavior set;
\end{itemize}
\end{defn}

\begin{defn}[Relation $\leq$]
Let $\leq$ be the binary relation over $2^{\mathcal{A}} \times 2^{\mathcal{A}}$, where
\begin{equation*}
B \leq B' \  \text{iff} \  B' = B \cup \left\{ a \in \mathcal{A} : \frac{\lvert N(a) \cap B \rvert}{\lvert N(a) \rvert} > \theta \right\}
\end{equation*}
\end{defn}

\begin{defn}[Behaviors]
$\mathcal{B} \subseteq 2^{\mathcal{A}}$ is the set of behavior sets.
\end{defn}




\begin{defn}[Path on $\mathcal{M}$]
We define a path $b = b_0, b_1, b_2, \ldots$ on a model $\mathcal{M} = (\mathcal{A}, N , \theta, I)$ to be such that $b_0 = I$ and $b_i \leq b_{i+1}$.
\end{defn}

\begin{lem}
[Path uniqueness]
\label{path-uniqueness}
There is exactly one path $b$ such that $b_0 = I$.
\end{lem}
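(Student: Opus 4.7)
The plan is to observe that the relation $\leq$ is in fact the graph of a total function on $2^{\mathcal{A}}$, and then proceed by a routine induction on the index $i$ to establish both existence and uniqueness of the path.

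First I would argue functionality of $\leq$. The definition
\begin{equation*}
B \leq B' \ \text{iff} \ B' = B \cup \left\{ a \in \mathcal{A} : \frac{\lvert N(a) \cap B \rvert}{\lvert N(a) \rvert} > \theta \right\}
\end{equation*}
uses an equality, not an inequality or inclusion, so for every fixed $B \subseteq \mathcal{A}$ the right-hand side denotes a single set (it is well defined because $N$ is serial, so $|N(a)|>0$ and the ratio is defined for every $a$). Hence for each $B$ there is exactly one $B'$ with $B \leq B'$; call this $B' = f(B)$.

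Next I would do the induction. For existence, define $b_0 = I$ and $b_{i+1} = f(b_i)$; by construction $b_i \leq b_{i+1}$ for all $i$, so $b = b_0, b_1, b_2, \ldots$ is a path on $\mathcal{M}$. For uniqueness, suppose $b$ and $b'$ are both paths with $b_0 = I = b'_0$. By strong induction on $i$: if $b_i = b'_i$, then since $b_i \leq b_{i+1}$ and $b'_i \leq b'_{i+1}$ we must have $b_{i+1} = f(b_i) = f(b'_i) = b'_{i+1}$. Hence $b = b'$.

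There is no real obstacle: the entire content of the lemma is that $\leq$ is deterministic, which is immediate from its definition. The only point worth double-checking is that $|N(a)|$ never vanishes so that the threshold condition is well posed, and this is guaranteed by seriality of $N$ in Definition~\ref{defn-model}.
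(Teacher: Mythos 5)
Your proof is correct and follows exactly the same approach as the paper's: the relation $\leq$ is functional on $2^{\mathcal{A}}$ (each $B$ determines a unique successor), and uniqueness of the path from $I$ follows by induction. You merely spell out the induction and the well-definedness check (seriality guaranteeing $\lvert N(a)\rvert > 0$) that the paper leaves implicit.
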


\begin{proof}
Follows directly from the fact that for any $b_i$ the relation $\leq$ defines strictly one $b_{i+1}$ such that $b_i \leq b_{i+1}$.
\end{proof}

Given lemma \ref{path-uniqueness}, we can define satisfaction for a model directly over paths.

\begin{defn}[Satisfaction]
\label{def-semantics}
Given a model $\mathcal{M} = (\mathcal{A}, N , \theta, I)$ and its path $b$. The notion of satisfaction of formulas in $\mathcal{M}$ at a position $i$ is defined as follows:
\[
        \begin{array}{lll}
    \mathcal{M}, i \models \beta_a &\text{iff} &a \in b_i \\
    \mathcal{M}, i \models N_{ax} &\text{iff} &x \in N_a \\
    \mathcal{M}, i \models \neg \varphi &\text{iff} &\mathcal{M}, i \not\models \varphi \\
    \mathcal{M}, i \models \varphi_1 \land \varphi_2 &\text{iff} &\mathcal{M}, i \models \varphi_1 \  \text{and} \  \mathcal{M}, i \models \varphi_2 \\
    \mathcal{M}, i \models X \varphi &\text{iff} &\mathcal{M}, i+1 \models \varphi \\
    \mathcal{M}, i \models \varphi U \psi &\text{iff} & \text{there exists a} \  n \geq 0 \  \text{such that} \\
        & & \mathcal{M}, i + n \models \psi \  \text{and} \  \mathcal{M}, i + j \models \varphi \  \text{for all} \  0 \leq j < n
\end{array}
\]    
\end{defn}

All LTL operators, plus propositional operators $N_{ax}$ and $\beta_a$ defined in a standard way.

\begin{defn}[Validity]
$\models \varphi$ iff $\mathcal{M}, i \models \varphi$ for all models $\mathcal{M}$ and positions $i$.
\end{defn}

\subsection{Examples}

In figure \ref{fig:example-domination} the behavior starts with agent $a$, and spreads through the entire network, while in figure \ref{fig:example-cluster} a slightly different network setting (agents $d$ and $e$ are related) creates a ``cluster'' that resists the behavior.

Notice the formula $(N_{bd} \land \beta_b) \lor (N_{fd} \land \beta_f) \rightarrow X \beta_d$ holds in both of them (though vacuously in figure \ref{fig:example-cluster}), and that it determines how agent $d$ is able to incorporate the behavior itself. A generalization of formulas like this one will be used in the axiomatization present in section \ref{axioms-system}.

For an example of an Until formula, consider $\neg (b_d \land b_e \land b_f) U b_d$. It holds in figure \ref{fig:example-domination}, and expresses the fact that agent $d$ will not adhere to the behavior until one of its friends does. Notice it does not hold in figure \ref{fig:example-cluster} because $d$ never gets to incorporate the behavior. Considering the previous statement, it is easy to see that, for instance, $G \neg \beta_d$ holds only in figure \ref{fig:example-cluster}.

    
\begin{figure}
    \centering
    \begin{tikzpicture}[align=center,semithick, every label/.append style={font=\scriptsize}]
        \tikzstyle{every state}=[draw=black,text=black]
        \tikzstyle{filled}=[fill=black!30]
        \node[state,inner sep=0pt,minimum size=15pt] (A) [filled, label=a]                                  {};
        \node[state,inner sep=0pt,minimum size=15pt] (B) [above right=0.12cm and 0.5cm of A, label=b]       {};
        \node[state,inner sep=0pt,minimum size=15pt] (C) [below right=0.12cm and 0.5cm of A, label=below:c] {};
        \node[state,inner sep=0pt,minimum size=15pt] (D) [right=0.12cm and 0.5cm of B, label=d]             {};
        \node[state,inner sep=0pt,minimum size=15pt] (E) [right=0.12cm and 0.5cm of C, label=below:e]       {};
        \node[state,inner sep=0pt,minimum size=15pt] (F) [below right=0.12cm and 0.5cm of D, label=f]       {};

        \path
            (A) edge                 node [] {} (C)
            (B) edge                 node [] {} (C)
            (B) edge                 node [] {} (D)
            (B) edge                 node [] {} (E)
            (B) edge [bend right=15] node [] {} (F)
            (C) edge                 node [] {} (E)
            (D) edge                 node [] {} (F)
            (E) edge                 node [] {} (F);
    \end{tikzpicture}
    \vspace*{15pt}
    \hspace*{15pt}
    \begin{tikzpicture}[align=center,semithick, every label/.append style={font=\scriptsize}]
        \tikzstyle{every state}=[draw=black,text=black]
        \tikzstyle{filled}=[fill=black!30]
        \node[state,inner sep=0pt,minimum size=15pt] (A) [filled, label=a]                                          {};
        \node[state,inner sep=0pt,minimum size=15pt] (B) [above right=0.12cm and 0.5cm of A, label=b]               {};
        \node[state,inner sep=0pt,minimum size=15pt] (C) [filled, below right=0.12cm and 0.5cm of A, label=below:c] {};
        \node[state,inner sep=0pt,minimum size=15pt] (D) [right=0.12cm and 0.5cm of B, label=d]                     {};
        \node[state,inner sep=0pt,minimum size=15pt] (E) [right=0.12cm and 0.5cm of C, label=below:e]               {};
        \node[state,inner sep=0pt,minimum size=15pt] (F) [below right=0.12cm and 0.5cm of D, label=f]               {};

        \path
            (A) edge                 node [] {} (C)
            (B) edge                 node [] {} (C)
            (B) edge                 node [] {} (D)
            (B) edge                 node [] {} (E)
            (B) edge [bend right=15] node [] {} (F)
            (C) edge                 node [] {} (E)
            (D) edge                 node [] {} (F)
            (E) edge                 node [] {} (F);
    \end{tikzpicture}
    \hspace*{15pt}
    \begin{tikzpicture}[align=center,semithick, every label/.append style={font=\scriptsize}]
        \tikzstyle{every state}=[draw=black,text=black]
        \tikzstyle{filled}=[fill=black!30]
        \node[state,inner sep=0pt,minimum size=15pt] (A) [filled, label=a]                                          {};
        \node[state,inner sep=0pt,minimum size=15pt] (B) [above right=0.12cm and 0.5cm of A, label=b]               {};
        \node[state,inner sep=0pt,minimum size=15pt] (C) [filled, below right=0.12cm and 0.5cm of A, label=below:c] {};
        \node[state,inner sep=0pt,minimum size=15pt] (D) [right=0.12cm and 0.5cm of B, label=d]                     {};
        \node[state,inner sep=0pt,minimum size=15pt] (E) [filled, right=0.12cm and 0.5cm of C, label=below:e]       {};
        \node[state,inner sep=0pt,minimum size=15pt] (F) [below right=0.12cm and 0.5cm of D, label=f]               {};

        \path
            (A) edge                 node [] {} (C)
            (B) edge                 node [] {} (C)
            (B) edge                 node [] {} (D)
            (B) edge                 node [] {} (E)
            (B) edge [bend right=15] node [] {} (F)
            (C) edge                 node [] {} (E)
            (D) edge                 node [] {} (F)
            (E) edge                 node [] {} (F);
    \end{tikzpicture}
    
    
    \begin{tikzpicture}[align=center,semithick, every label/.append style={font=\scriptsize}]
        \tikzstyle{every state}=[draw=black,text=black]
        \tikzstyle{filled}=[fill=black!30]
        \node[state,inner sep=0pt,minimum size=15pt] (A) [filled, label=a]                                          {};
        \node[state,inner sep=0pt,minimum size=15pt] (B) [filled, above right=0.12cm and 0.5cm of A, label=b]       {};
        \node[state,inner sep=0pt,minimum size=15pt] (C) [filled, below right=0.12cm and 0.5cm of A, label=below:c] {};
        \node[state,inner sep=0pt,minimum size=15pt] (D) [right=0.12cm and 0.5cm of B, label=d]                     {};
        \node[state,inner sep=0pt,minimum size=15pt] (E) [filled, right=0.12cm and 0.5cm of C, label=below:e]       {};
        \node[state,inner sep=0pt,minimum size=15pt] (F) [filled, below right=0.12cm and 0.5cm of D, label=f]       {};

        \path
            (A) edge                 node [] {} (C)
            (B) edge                 node [] {} (C)
            (B) edge                 node [] {} (D)
            (B) edge                 node [] {} (E)
            (B) edge [bend right=15] node [] {} (F)
            (C) edge                 node [] {} (E)
            (D) edge                 node [] {} (F)
            (E) edge                 node [] {} (F);
    \end{tikzpicture}
    \hspace*{15pt}
    \begin{tikzpicture}[align=center,semithick, every label/.append style={font=\scriptsize}]
        \tikzstyle{every state}=[draw=black,text=black]
        \tikzstyle{filled}=[fill=black!30]
        \node[state,inner sep=0pt,minimum size=15pt] (A) [filled, label=a]                                          {};
        \node[state,inner sep=0pt,minimum size=15pt] (B) [filled, above right=0.12cm and 0.5cm of A, label=b]       {};
        \node[state,inner sep=0pt,minimum size=15pt] (C) [filled, below right=0.12cm and 0.5cm of A, label=below:c] {};
        \node[state,inner sep=0pt,minimum size=15pt] (D) [filled, right=0.12cm and 0.5cm of B, label=d]             {};
        \node[state,inner sep=0pt,minimum size=15pt] (E) [filled, right=0.12cm and 0.5cm of C, label=below:e]       {};
        \node[state,inner sep=0pt,minimum size=15pt] (F) [filled, below right=0.12cm and 0.5cm of D, label=f]       {};

        \path
            (A) edge                 node [] {} (C)
            (B) edge                 node [] {} (C)
            (B) edge                 node [] {} (D)
            (B) edge                 node [] {} (E)
            (B) edge [bend right=15] node [] {} (F)
            (C) edge                 node [] {} (E)
            (D) edge                 node [] {} (F)
            (E) edge                 node [] {} (F);
    \end{tikzpicture}
\caption{This figure presents a social network (with $\theta = 1/3$) with some agents and the evolution of a behavior. Nodes are agents, and the edges represent the relation between them. Each figure, ordered from left to right and top to bottom, represents a distinct position as the network evolves. The behavior eventually dominates the network.}
\label{fig:example-domination}
\end{figure}
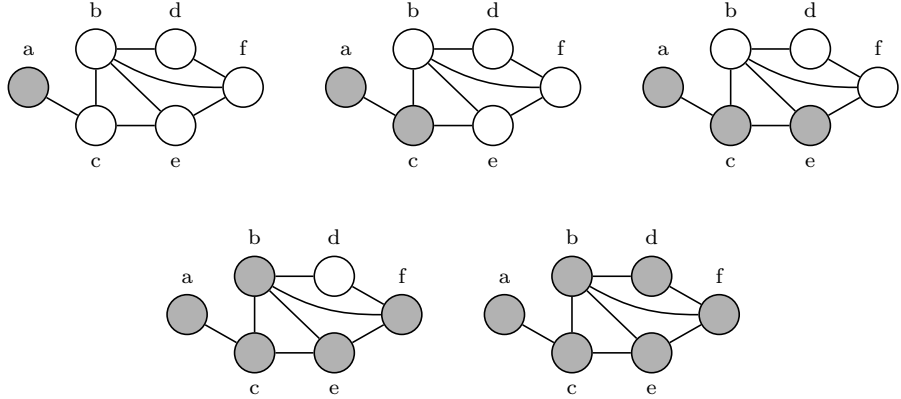


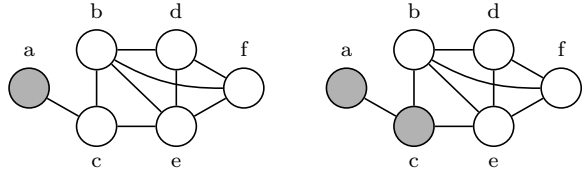
\begin{figure}
    \centering
    \begin{tikzpicture}[align=center,semithick, every label/.append style={font=\scriptsize}]
        \tikzstyle{every state}=[draw=black,text=black]
        \tikzstyle{filled}=[fill=black!30]
        \node[state,inner sep=0pt,minimum size=15pt] (A) [filled, label=a]                                  {};
        \node[state,inner sep=0pt,minimum size=15pt] (B) [above right=0.12cm and 0.5cm of A, label=b]       {};
        \node[state,inner sep=0pt,minimum size=15pt] (C) [below right=0.12cm and 0.5cm of A, label=below:c] {};
        \node[state,inner sep=0pt,minimum size=15pt] (D) [right=0.12cm and 0.5cm of B, label=d]             {};
        \node[state,inner sep=0pt,minimum size=15pt] (E) [right=0.12cm and 0.5cm of C, label=below:e]       {};
        \node[state,inner sep=0pt,minimum size=15pt] (F) [below right=0.12cm and 0.5cm of D, label=f]       {};

        \path
            (A) edge                 node [] {} (C)
            (B) edge                 node [] {} (C)
            (B) edge                 node [] {} (D)
            (B) edge                 node [] {} (E)
            (B) edge [bend right=15] node [] {} (F)
            (C) edge                 node [] {} (E)
            (D) edge                 node [] {} (E)
            (D) edge                 node [] {} (F)
            (E) edge                 node [] {} (F);
    \end{tikzpicture}
    \hspace*{15pt}
    \begin{tikzpicture}[align=center,semithick, every label/.append style={font=\scriptsize}]
        \tikzstyle{every state}=[draw=black,text=black]
        \tikzstyle{filled}=[fill=black!30]
        \node[state,inner sep=0pt,minimum size=15pt] (A) [filled, label=a]                                          {};
        \node[state,inner sep=0pt,minimum size=15pt] (B) [above right=0.12cm and 0.5cm of A, label=b]               {};
        \node[state,inner sep=0pt,minimum size=15pt] (C) [filled, below right=0.12cm and 0.5cm of A, label=below:c] {};
        \node[state,inner sep=0pt,minimum size=15pt] (D) [right=0.12cm and 0.5cm of B, label=d]                     {};
        \node[state,inner sep=0pt,minimum size=15pt] (E) [right=0.12cm and 0.5cm of C, label=below:e]               {};
        \node[state,inner sep=0pt,minimum size=15pt] (F) [below right=0.12cm and 0.5cm of D, label=f]               {};

        \path
            (A) edge                 node [] {} (C)
            (B) edge                 node [] {} (C)
            (B) edge                 node [] {} (D)
            (B) edge                 node [] {} (E)
            (B) edge [bend right=15] node [] {} (F)
            (C) edge                 node [] {} (E)
            (D) edge                 node [] {} (E)
            (D) edge                 node [] {} (F)
            (E) edge                 node [] {} (F);
    \end{tikzpicture}
\caption{Propagation of the behavior is stopped by a cluster of agents ($\theta = 1/3$).}
\label{fig:example-cluster}
\end{figure}

\section{Axioms for the LTL-SN system}
\label{axioms-system}

In this section the axioms of LTL-SN are presented. This formalism is based on reduction axioms, and this strategy allows us to ultimately reduce formulas into propositional logic. However before we can do that, we need to define a few abbreviations and enunciate some useful properties.

First, we will define the ``Majority'' abbreviation, just as is defined in \cite{DynamicEpistemicLogicsDiffusionPredictionSocialNetworks}. This formula determines whether an agent will adopt the behavior in the next position of the path, because it holds when a fraction of at least $\theta$ of its neighbors have already adopted the behavior.

\begin{defn}[Majority]
\label{majority}
An abbreviation representing ``majority'':
\[
\beta_{N(a) \geq \theta} \coloneqq \bigvee \limits _{\mathcal{G} \subseteq \mathcal{N} \subseteq \mathcal{A}: \frac{\lvert \mathcal{G} \rvert}{\lvert \mathcal{N} \rvert} \geq \theta} \left( \bigwedge \limits _{b \in \mathcal{N}} N_{ab} \land \bigwedge \limits _{b \not\in \mathcal{N}} \neg N_{ab} \land \bigwedge \limits _{b \in \mathcal{G}} \beta_b \right)
\]
\end{defn}

We also define an expansion that is equivalent to the Until operator, and demonstrate that both are indeed semantically equivalent. This expansion is a key part of the reduction axiomatization.

\begin{defn}[Until expansion]
\label{until-expansion} Let $\varphi U \psi$ be a formula. We define its Until expansion steps as
\begin{align*}
u^0(\varphi U \psi) &\coloneqq \psi \\
u^i(\varphi U \psi) &\coloneqq (\varphi \land X u^{i-1})
\end{align*}
Then, its Until expansion $EXP_U(\varphi U \psi)$ in a model with agent set $\mathcal{A}$ is defined as
\begin{equation*}
EXP_U(\varphi U \psi)  \coloneqq \bigvee \limits _{0 \leq i \leq \lvert \mathcal{A} \rvert} u^i(\varphi U \psi)
\end{equation*}
\end{defn}

Before we can show equivalence of the entire Until expansion, we demonstrate that its first $n$ steps are equivalent to the Until operator if and only if the right-hand side holds before the next $n$ positions (or vacuously never at all). After that, we also determine that the longest distance from the current position at which this can happen has to be less than $\lvert \mathcal{A} \rvert$ positions ahead. With both of these, we are able to prove equivalence of the complete Until expansion.

\begin{lem}[Until expansion step satisfaction]
\label{until-expansion-step-validity}
$\mathcal{M}, i \models u^n(\varphi U \psi)$ if and only if $\mathcal{M}, i + n \models \psi$ and $\mathcal{M}, i + j \models \varphi$ for all $0 \leq j < n$.
\end{lem}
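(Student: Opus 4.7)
The plan is to prove this by straightforward induction on $n$, unfolding the recursive definition of $u^n$ and using the semantics of $\land$ and $X$ from Definition \ref{def-semantics}.

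For the base case $n = 0$, I would use that $u^0(\varphi U \psi) \coloneqq \psi$ by definition, so $\mathcal{M}, i \models u^0(\varphi U \psi)$ is simply $\mathcal{M}, i \models \psi$, which is exactly $\mathcal{M}, i+0 \models \psi$. The universal quantifier ``$\mathcal{M}, i+j \models \varphi$ for all $0 \leq j < 0$'' is vacuous, so both sides coincide.

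For the inductive step, assume the equivalence for $n$ and consider $u^{n+1}(\varphi U \psi) = \varphi \land X u^n(\varphi U \psi)$. By the semantics of conjunction and the next operator, $\mathcal{M}, i \models u^{n+1}(\varphi U \psi)$ iff $\mathcal{M}, i \models \varphi$ and $\mathcal{M}, i+1 \models u^n(\varphi U \psi)$. Applying the induction hypothesis to the second conjunct at position $i+1$, this holds iff $\mathcal{M}, i+1+n \models \psi$ and $\mathcal{M}, i+1+j \models \varphi$ for all $0 \leq j < n$. A simple re-indexing ($j' = j+1$, together with the condition $\mathcal{M}, i \models \varphi$ covering $j' = 0$) rewrites the $\varphi$-condition as $\mathcal{M}, i + j' \models \varphi$ for all $0 \leq j' < n+1$, and $i+1+n = i+(n+1)$ gives the $\psi$-condition. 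Both directions are obtained by reading this chain of equivalences forwards and backwards.

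There is no real obstacle here: the statement essentially reflects the definition of $u^n$, and the only bookkeeping required is the index shift in the inductive step. The lemma's role in the paper is to set up the proof that $EXP_U(\varphi U \psi)$ is equivalent to $\varphi U \psi$, where the nontrivial content (bounding the witness by $|\mathcal{A}|$) will come from a separate argument about the relation $\leq$ stabilizing after at most $|\mathcal{A}|$ steps, not from this lemma.
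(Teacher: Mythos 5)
Your proof is correct and follows essentially the same route as the paper: induction on $n$, unfolding $u^{n+1} = \varphi \land X u^n$ via the semantics of $\land$ and $X$, applying the induction hypothesis at position $i+1$, and re-indexing. If anything, your explicit handling of the index shift and the remark that the chain of equivalences yields both directions is slightly cleaner than the paper's write-up, but the argument is the same.
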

\begin{proof}
We will abbreviate $u^n(\varphi U \psi)$ as $u^n$ for this proof. The proof goes by induction on $n$.

\textbf{Base case} ($n = 0$):

It is trivial to see that $\mathcal{M}, i \models u^0$ if and only if $\mathcal{M}, i \models \psi$, and there is no $j$ such that $0 \leq j < n$.

\textbf{Induction hypothesis} ($n = k$):

$\mathcal{M}, i \models u^k$ if and only if $\mathcal{M}, i + k \models \psi$, and $\mathcal{M}, i + j \models \varphi$ for all $0 \leq j < k$.

\textbf{Induction step} ($n = k + 1$):

Assume $\mathcal{M}, i \models u^{k + 1}$.

By definition we have that $\mathcal{M}, i \models \varphi \land X u^k$, if and only if $\mathcal{M}, i \models \varphi $ and $\mathcal{M}, i \models X u^k$. $\mathcal{M}, i \models X u^k$ if and only if $\mathcal{M}, i + 1 \models u^k$.

Without loss of generality, we can rename $i$ to $i + 1$ in the induction hypothesis, and then we have $\mathcal{M}, i + 1 \models u^k$ if and only if $\mathcal{M}, i + 1 + k \models \psi$, and $\mathcal{M}, i + 1+ j \models \varphi$ for all $0 \leq j < k$.

$\mathcal{M}, i + 1+ j \models \varphi$ for all $0 \leq j < k$ and $\mathcal{M}, i \models \varphi $ can be rewritten as $\mathcal{M}, i + j \models \varphi$ for all $0 \leq j < k + 1$.

Therefore we are able to conclude that $\mathcal{M}, i + (k + 1) \models \psi$ and $\mathcal{M}, i + j \models \varphi$ for all $0 \leq j < k + 1$, and that finishes our proof.
\end{proof}

\begin{lem}[Fixed-point]
\label{fixed-point}
Let $b = b_0, b_1, b_2, \ldots, b_n$ be a path. For some $i < \lvert \mathcal{A} \rvert$ a fixed-point $b_i = b_{i+1}$ is reached.
\end{lem}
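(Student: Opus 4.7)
The plan is to combine a monotonicity observation with a pigeonhole argument on the $\lvert \mathcal{A} \rvert + 1$ sets $b_0, b_1, \ldots, b_{\lvert \mathcal{A} \rvert}$, all of which are subsets of the finite set $\mathcal{A}$. First I would record the immediate consequence of the definition of $\leq$: since $b_{i+1}$ is obtained from $b_i$ by taking a union, the sequence is monotone non-decreasing, i.e.\ $b_i \subseteq b_{i+1}$ for every $i$.

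Next I would argue by contradiction: suppose $b_0, b_1, \ldots, b_{\lvert \mathcal{A} \rvert}$ are pairwise distinct. Then monotonicity upgrades each inclusion to a strict inclusion, giving the chain $b_0 \subsetneq b_1 \subsetneq \cdots \subsetneq b_{\lvert \mathcal{A} \rvert}$. Each strict inclusion increases cardinality by at least one, so $\lvert b_{\lvert \mathcal{A} \rvert} \rvert \geq \lvert b_0 \rvert + \lvert \mathcal{A} \rvert$; combined with $b_{\lvert \mathcal{A} \rvert} \subseteq \mathcal{A}$ this forces $\lvert b_0 \rvert = 0$, i.e.\ $b_0 = \emptyset$.

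The main obstacle is that the counting argument as stated only yields $i \leq \lvert \mathcal{A} \rvert$, whereas the lemma demands $i < \lvert \mathcal{A} \rvert$. I would close this gap by showing that $b_0 = \emptyset$ already gives a fixed point at index $0$, so the strict chain above cannot actually occur. Indeed, if $b_0 = \emptyset$, then for any $a \in \mathcal{A}$ the membership test for $b_1$ becomes $\lvert N(a) \cap \emptyset \rvert / \lvert N(a) \rvert > \theta$, i.e.\ $0 > \theta$; this is impossible because $\theta \in [0,1]$ and $\lvert N(a) \rvert \geq 1$ by seriality. Hence $b_1 = \emptyset = b_0$, contradicting $b_0 \subsetneq b_1$.

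Combining the two, the $\lvert \mathcal{A} \rvert + 1$ sets cannot all be distinct, so by monotonicity two consecutive ones must coincide: there exists $i < \lvert \mathcal{A} \rvert$ with $b_i = b_{i+1}$, as required.
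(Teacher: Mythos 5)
Your proof is correct and follows the same essential route as the paper's: monotonicity ($b_i \subseteq b_{i+1}$) plus finiteness of $\mathcal{A}$ and a counting argument on cardinalities. You are in fact more careful than the paper, whose informal ``slowest scenario has $\lvert b_0 \rvert = 1$'' argument tacitly assumes a nonempty initial set; your observation that $b_0 = \emptyset$ is itself an immediate fixed point (since $0 > \theta$ is impossible for $\theta \in [0,1]$, with $\lvert N(a) \rvert \geq 1$ by seriality) closes exactly the gap needed for the strict bound $i < \lvert \mathcal{A} \rvert$ rather than $i \leq \lvert \mathcal{A} \rvert$.
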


\begin{proof}
This follows directly from the fact that $\mathcal{A}$ is finite and $b_i \subseteq b_{i+1}$ for all $i$ by definition. The slowest possible diffusion scenario is where $\lvert b_0 \rvert = 1$ and only one agent adopts the behavior per position on the path until the fixed-point is reached. It follows directly that the fixed-point will have to be reached for some $i < \lvert \mathcal{A} \rvert$.
\end{proof}

\begin{lem}[Until expansion soundness]
\label{until-expansion-soundness}
\[
\models \varphi U \psi \leftrightarrow EXP_U(\varphi U \psi)
\]
\end{lem}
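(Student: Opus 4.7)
The plan is to prove the biconditional by chaining Lemmas~\ref{until-expansion-step-validity} and~\ref{fixed-point}. For the $(\Leftarrow)$ direction, suppose $\mathcal{M}, i \models EXP_U(\varphi U \psi)$. Unfolding the disjunction, there is some $n$ with $0 \leq n \leq |\mathcal{A}|$ and $\mathcal{M}, i \models u^n(\varphi U \psi)$. Lemma~\ref{until-expansion-step-validity} immediately gives $\mathcal{M}, i + n \models \psi$ and $\mathcal{M}, i + j \models \varphi$ for $0 \leq j < n$, which is exactly the semantic clause for $\varphi U \psi$ from Definition~\ref{def-semantics}.

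The $(\Rightarrow)$ direction is the interesting one, since an arbitrary Until witness $n$ could be larger than $|\mathcal{A}|$ while the disjunction in $EXP_U$ only ranges over $n \leq |\mathcal{A}|$. My plan is to use Lemma~\ref{fixed-point} to truncate any such witness. Let $m < |\mathcal{A}|$ be the fixed point of the path, so $b_k = b_m$ for all $k \geq m$. A routine structural induction on formulas, handling the $X$ and $U$ clauses via Definition~\ref{def-semantics}, then yields a ``collapse'' property: for every formula $\chi$ and every $k \geq m$, $\mathcal{M}, k \models \chi$ if and only if $\mathcal{M}, m \models \chi$. Intuitively this holds because the suffixes of the path starting at positions $k$ and $m$ coincide.

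Given this, assume $\mathcal{M}, i \models \varphi U \psi$ and let $n$ be the smallest witness. If $n \leq |\mathcal{A}|$, Lemma~\ref{until-expansion-step-validity} yields $\mathcal{M}, i \models u^n(\varphi U \psi)$ and we are done. Otherwise $n > |\mathcal{A}| > m$, so in particular $i + n > m$. If $i \geq m$, the collapse property applied to $\psi$ gives $\mathcal{M}, i \models \psi$, so $n' = 0$ is a valid witness. If $i < m$, set $n' = m - i$; then $\mathcal{M}, i + n' \models \psi$ by collapse (since $m \geq m$ and $i + n \geq m$ give $\mathcal{M}, m \models \psi$), while $\mathcal{M}, i + j \models \varphi$ for $0 \leq j < n'$ follows from the original witness because $n' < |\mathcal{A}| < n$. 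In both subcases $n' \leq |\mathcal{A}|$, so Lemma~\ref{until-expansion-step-validity} converts $n'$ back into satisfaction of the disjunct $u^{n'}(\varphi U \psi)$ of $EXP_U(\varphi U \psi)$.

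The main obstacle is the collapse property. It is intuitively clear once the path becomes constant, but it has to be proved by structural induction, taking some care with $X$ and $U$ since those clauses refer explicitly to future positions; the argument is that a constant suffix makes LTL satisfaction shift-invariant, which is precisely what allows an arbitrarily large Until witness to be replaced by one bounded by $|\mathcal{A}|$.
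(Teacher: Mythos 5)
Your proof is correct and takes essentially the same route as the paper: the ($\Leftarrow$) direction is Lemma~\ref{until-expansion-step-validity} applied to one disjunct of $EXP_U$, and the ($\Rightarrow$) direction uses the fixed point of Lemma~\ref{fixed-point} to replace an Until witness larger than $\lvert\mathcal{A}\rvert$ by a bounded one. You additionally make explicit the shift-invariance (``collapse'') property of satisfaction past the fixed point and the check that $\varphi$ still holds at all positions up to the truncated witness, details that the paper's own proof leaves implicit.
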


\begin{proof}
By definition, $EXP_U(\varphi U \psi) = u^0(\varphi U \psi) \lor u^1(\varphi U \psi) \lor \ldots \lor u^{\lvert \mathcal{A} \rvert}(\varphi U \psi)$.

For any model $\mathcal{M}$ and position $j$, by lemma \ref{until-expansion-step-validity} we know that a term $u^i$ holds if and only if $\psi$ holds in position $j + i$. By lemma \ref{fixed-point} we know that if $\varphi U \psi$ holds, $\psi$ must hold at some position $i$ such that $0 \leq j + i \leq \lvert \mathcal{A} \rvert + j$. Therefore $\models \varphi U \psi$ if and only if $\models EXP_U(\varphi U \psi)$.
\end{proof}

\begin{table}[H]
\centering
\begin{tabular*}{\textwidth}{l@{\extracolsep{\fill}}ll}
\hline
\multicolumn{2}{l}{Network  axioms} \\ \hline
$\neg N_{aa}$ & Irreflexivity \\
$N_{ab} \leftrightarrow N_{ba}$ & Symmetry \\
$\bigvee \limits _{b \in \mathcal{A}} N_{ab}$ & Seriality \\
\hline
\multicolumn{2}{l}{Reduction axioms} \\ \hline
$X N_{ab} \leftrightarrow N_{ab}$ & Red.Ax.$X$.$N$ \\
$X \beta_a \leftrightarrow \beta_a \lor \beta_{N(a)} \geq \theta$ & Red.Ax.$X$.$\beta$ \\
$\varphi U \psi \leftrightarrow EXP_U(\varphi U \psi)$ & Red.Ax.$U$ \\
\hline
\multicolumn{2}{l}{LTL axioms} \\ \hline
\textit{All classical propositional tautologies} \\
$G(\varphi \rightarrow \psi) \rightarrow (G\varphi \rightarrow G\psi)$ & A1 \\
$\neg X \varphi \leftrightarrow X \neg \varphi$ & A2 \\
$X(\varphi \rightarrow \psi) \rightarrow (X \varphi \rightarrow X \psi)$ & A3 \\
$G(\varphi \rightarrow X \varphi) \rightarrow (\varphi \rightarrow G \varphi)$ & A4 \\
$(\varphi U \psi) \leftrightarrow \psi \lor (\varphi \land X(\varphi U \psi))$ & A5 \\
$(\varphi U \psi) \rightarrow F \psi$ & A6 \\
$\varphi U (\psi_1 \lor \psi_2) \leftrightarrow \varphi U \psi_1 \lor \varphi U \psi_2$ & A7 \\
$X (\varphi \land \psi) \leftrightarrow X \varphi \land X \psi$ & A8 \\
$\displaystyle{\seq{\varphi \qquad \varphi \rightarrow \psi}{\psi}}$ & MP \\
$\displaystyle{\seq{\vdash \varphi}{\vdash G \varphi}}$ & NG \\
$\displaystyle{\seq{\vdash \varphi}{\vdash X \varphi}}$ & NX \\
\end{tabular*}
\caption{LTL-SN axioms.}
\end{table}

LTL axioms are as seen on \cite{MadsDamLecture}. Notice that for completeness of this particular class of models, only axioms $A2$, $A3$, $A8$ and the rules of inference $MP$ and $NX$ were used. The others were kept merely for reference.

\begin{prop}[Next rule] \label{X} The following inference rule is admissible:

$\displaystyle{\seq{\vdash \phi \leftrightarrow \psi}{\vdash X\phi \leftrightarrow X\psi}}$

\end{prop}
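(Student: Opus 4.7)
The plan is to derive the conclusion by the standard ``normality'' argument for the next-time operator, using only the rule $NX$ and the distribution axiom $A3$, together with propositional reasoning and $MP$.

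First I would unfold the biconditional. From the premise $\vdash \phi \leftrightarrow \psi$, by classical propositional tautologies and $MP$, I obtain both $\vdash \phi \rightarrow \psi$ and $\vdash \psi \rightarrow \phi$ separately.

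Next, I would apply the necessitation rule $NX$ to each of these implications, yielding $\vdash X(\phi \rightarrow \psi)$ and $\vdash X(\psi \rightarrow \phi)$. Then, instantiating axiom $A3$ with $\varphi := \phi, \psi := \psi$ gives $\vdash X(\phi \rightarrow \psi) \rightarrow (X\phi \rightarrow X\psi)$, and one application of $MP$ yields $\vdash X\phi \rightarrow X\psi$. A symmetric argument, instantiating $A3$ with the roles of $\phi$ and $\psi$ swapped, gives $\vdash X\psi \rightarrow X\phi$.

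Finally, I would combine the two derived implications using the propositional tautology $(\alpha \rightarrow \beta) \rightarrow ((\beta \rightarrow \alpha) \rightarrow (\alpha \leftrightarrow \beta))$ together with $MP$ to conclude $\vdash X\phi \leftrightarrow X\psi$. There is no real obstacle here: the proof is the familiar derivation showing that any normal modal operator (here $X$, equipped with $NX$ and a $K$-style axiom $A3$) respects provable equivalence. The only point worth stating explicitly is that $NX$ is used on a \emph{theorem} of the system (not on an arbitrary assumption), which is precisely the form in which the rule is given in the axiom table.
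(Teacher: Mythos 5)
Your proof is correct and follows exactly the route the paper sketches: split the biconditional via propositional tautologies and $MP$, apply $NX$ to each implication, distribute with axiom $A3$, and recombine propositionally. The paper merely states that the proof "follows straightforward from axiom A3 and inference rules NX and MP," and your proposal is the standard spelled-out version of that same argument.
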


\begin{prop}[Replacement of equivalents] \label{EQ} If $\vdash \phi_1 \leftrightarrow \phi_2$, then \\ $\vdash [\phi_1/\psi]\chi \leftrightarrow [\phi_2/\psi]\chi$.

\end{prop}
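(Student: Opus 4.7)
The plan is to prove the proposition by structural induction on $\chi$. The base case splits in two: if $\chi = \psi$, then $[\phi_1/\psi]\chi = \phi_1$ and $[\phi_2/\psi]\chi = \phi_2$, so the claim reduces directly to the hypothesis $\vdash \phi_1 \leftrightarrow \phi_2$; if $\chi$ is atomic (that is, $\top$, $N_{ab}$, or $\beta_a$) and distinct from $\psi$, then neither substitution changes $\chi$ and the equivalence is a propositional tautology discharged by MP.

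For the inductive step with $\chi \neq \psi$, the Boolean cases $\chi = \neg\chi'$ and $\chi = \chi_1 \land \chi_2$ are routine: I would apply the induction hypothesis to the immediate subformulas and close with a propositional tautology and MP. The case $\chi = X\chi'$ is handled by applying the IH to $\chi'$ and then invoking Proposition \ref{X} to push the resulting equivalence under $X$.

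The delicate case is $\chi = \chi_1 U \chi_2$, and I expect it to be the main obstacle, since no admissible rule is available to push equivalence directly under $U$. The plan there is to detour through the reduction axiom Red.Ax.$U$, which gives $\vdash [\phi_i/\psi]\chi \leftrightarrow EXP_U([\phi_i/\psi]\chi_1 \, U \, [\phi_i/\psi]\chi_2)$ for $i = 1, 2$. Since $EXP_U(\alpha \, U \, \beta)$ is the finite disjunction $\bigvee_{0 \leq n \leq |\mathcal{A}|} u^n(\alpha \, U \, \beta)$, built syntactically from $\alpha$, $\beta$, $\land$, $X$, and $\bigvee$ only, repeated application of the cases already treated, applied to $\chi_1$ and $\chi_2$ via the IH together with Proposition \ref{X} and the Boolean cases, will yield $\vdash EXP_U([\phi_1/\psi]\chi_1 \, U \, [\phi_1/\psi]\chi_2) \leftrightarrow EXP_U([\phi_2/\psi]\chi_1 \, U \, [\phi_2/\psi]\chi_2)$. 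Transitivity of $\leftrightarrow$ through the two applications of Red.Ax.$U$ then closes the case. The subtlety worth checking carefully is that substitution commutes with $EXP_U$ and that the bound $|\mathcal{A}|$ is determined by the language rather than the substituted formula, so that the two expansions line up disjunct by disjunct.
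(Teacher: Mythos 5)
Your proof is correct and follows the same overall strategy the paper indicates---induction on the construction of $\chi$, with the base case split between $\chi = \psi$ and $\chi$ atomic, the Boolean cases closed by propositional tautologies and MP, and the $X$ case closed by Proposition \ref{X}. The one place where you necessarily go beyond the paper's one-line sketch is the $U$ case: the paper defers to standard LTL literature (where congruence for $U$ is typically derived from the $U$ axioms, e.g.\ A5--A7), whereas you discharge it via Red.Ax.$U$, replacing each side by its finite expansion $EXP_U$ and propagating the inductive equivalences for $\chi_1, \chi_2$ through $\lor$, $\land$ and $X$ (the latter again by Proposition \ref{X}), which implicitly requires a small subsidiary induction on the expansion depth of $u^n$. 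This is sound in this system precisely because, as you note, the bound $\lvert \mathcal{A} \rvert$ is fixed by the model and unaffected by the substitution, so the two expansions align disjunct by disjunct; what it buys is a self-contained argument that never needs a derived congruence rule for $U$, at the cost of being specific to LTL-SN (it would not transfer to full LTL, where no finite expansion of $U$ exists). One small simplification: you do not actually need substitution to commute with $EXP_U$; since $\chi = \chi_1 U \chi_2 \neq \psi$, substitution distributes over the $U$ connective, and Red.Ax.$U$ can then be applied directly to each substituted formula, which is exactly how your transitivity chain is set up.
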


The proof of proposition \ref{X} follows straightforward from axiom A3 and inference rules NX and MP. The proof of proposition \ref{EQ} follows by induction on the construction of $\chi$. These proofs are rather standard in LTL literature \cite{burgess1982axioms}.

\section{Soundness and completeness}
\label{soundness-and-completeness}
Since the formulas of LTL-SN are evaluated over conventional LTL paths (or ``flows of time'' \cite{10.2307/40180020,doi:https://doi.org/10.1002/9781405164801.ch10}), all standard LTL axioms are also sound in this logic.


\begin{lem}[Soundness]
Let $\mathcal{M}$ be an arbitrary LTL-SN model and $B \in \mathcal{B}$ a behavior set.
\end{lem}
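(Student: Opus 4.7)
The plan is to verify that each axiom schema of LTL-SN is valid in every model and that each inference rule preserves validity, organising the argument into four blocks that mirror the axiom table. Throughout, I would work at a fixed model $\mathcal{M}$ and a fixed position $i$ along its unique path $b$ (Lemma~\ref{path-uniqueness}), which lets me treat formulas semantically without ever having to choose between parallel futures.

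First, the three network axioms are dispatched directly from Definition~\ref{defn-model}: because $N$ is stipulated to be irreflexive, symmetric and serial, and the clause for $N_{ab}$ in Definition~\ref{def-semantics} is $b \in N(a)$, the formulas $\neg N_{aa}$, $N_{ab} \leftrightarrow N_{ba}$ and $\bigvee_{b \in \mathcal{A}} N_{ab}$ are true at every position of every path.

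Second, for the reduction axioms I would use uniqueness of paths together with the fact that $N$ does not depend on position. Axiom Red.Ax.$X$.$N$ is immediate because $N_{ab}$ has constant truth along $b$. Axiom Red.Ax.$U$ is exactly the content of Lemma~\ref{until-expansion-soundness}, so I would simply cite it. Axiom Red.Ax.$X$.$\beta$ is the interesting case: unfolding $\mathcal{M},i \models X\beta_a$ via Definition~\ref{def-semantics} yields $a \in b_{i+1}$, which by the definition of $\leq$ holds iff $a \in b_i$ or $|N(a) \cap b_i|/|N(a)| > \theta$. I would then match the second disjunct against the Majority abbreviation of Definition~\ref{majority} by instantiating its outer disjunction with the actual neighborhood $\mathcal{N} = N(a)$ and the actual set of adopting neighbours $\mathcal{G} = N(a) \cap b_i$, showing that the two conditions single out the same finite collection of behaviour sets.

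Finally, the LTL axioms A1--A8 and the inference rules MP, NG, NX are handled by the standard argument for LTL soundness: paths in LTL-SN are ordinary linear flows of time and the semantic clauses for $\neg, \wedge, X, U$ coincide with the usual LTL clauses, so each item reduces to a known validity and I would merely cite \cite{MadsDamLecture,burgess1982axioms}. The main obstacle I anticipate is Red.Ax.$X$.$\beta$, not because of any deep conceptual issue but because of a subtle threshold mismatch: the Majority abbreviation in Definition~\ref{majority} uses $\geq \theta$ while the relation $\leq$ uses the strict $> \theta$. I would need to pin down the intended reading before the disjunctive-normal-form matching lines up; once that is settled, the remaining cases are pure bookkeeping over subsets $\mathcal{G} \subseteq \mathcal{N} \subseteq \mathcal{A}$.
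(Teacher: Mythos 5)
Your proof follows the paper's own argument essentially step for step: network axioms read off Definition~\ref{defn-model}, Red.Ax.$X$.$N$ from the position-independence of $N$, Red.Ax.$U$ by citing Lemma~\ref{until-expansion-soundness}, Red.Ax.$X$.$\beta$ by unfolding $X\beta_a$ through the relation $\leq$ and matching the Majority disjunction with $\mathcal{N}=N(a)$ and $\mathcal{G}=N(a)\cap b_i$, and the LTL axioms and rules by the standard soundness argument over linear flows of time. The $\geq\theta$ versus $>\theta$ mismatch you flag between Definition~\ref{majority} and the relation $\leq$ is a genuine discrepancy in the paper itself, which its proof passes over with ``it can be shown \ldots just as in'' the cited reference, so your extra care there is a point in your favour rather than a gap in your argument.
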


\begin{proof}
Consider each case:
\begin{itemize}
    \item All network axioms follow directly from definition \ref{defn-model};
    \item $\mathcal{M}, i \models X N_{ab} \leftrightarrow N_{ab}$ follows directly from the fact that $\leq$ never alters the network structure;
    \item $\mathcal{M}, i \models \varphi U \psi \leftrightarrow EXP_U(\varphi U \psi)$ follows directly from lemma \ref{until-expansion-soundness};
    \item For $\mathcal{M}, i \models X \beta_a \leftrightarrow \beta_a \lor \beta_{N(a)} \geq \theta$, consider $B = b_i$ and a $B'$ such that $B \leq B'$.
    
    $\mathcal{M}, i \models X \beta_a $ iff $\mathcal{M}, i + 1 \models \beta_a $ iff $a \in B' = B \cup \left\{ b \in \mathcal{A} : \frac{\lvert N(b) \cap B \rvert}{\lvert N(b) \rvert} > \theta \right\}$ iff $\mathcal{M}, i \models \beta_a$ or $a \in \left\{ b \in \mathcal{A} : \frac{\lvert N(b) \cap B \rvert}{\lvert N(b) \rvert} > \theta \right\}$.
    
    It can be shown that the large disjunct in definition \ref{majority} is satisfied iff $a \in \left\{ b \in \mathcal{A} : \frac{\lvert N(b) \cap B \rvert}{\lvert N(b) \rvert} > \theta \right\}$ just as in \cite{DynamicEpistemicLogicsDiffusionPredictionSocialNetworks}.
    
    Hence $\mathcal{M}, i \models X \beta_a$ iff $\mathcal{M}, i \models \beta_a$ or $\mathcal{M}, i \models \beta_{N(A) \geq \theta}$.
\end{itemize}
\end{proof}

\subsection{Completeness}
\label{completeness}
 The strategy for this completeness proof is a two-step translation of LTL-SN formulas into propositional formulas. The first step replaces Until operators, and the second step replaces Next operators. The proofs of equivalence of each step goes by induction.

\begin{defn}[Until Translation]\label{tu}
\begin{align*}
    t_u(N_{ab}) &= N_{ab} \\
    t_u(\beta_a) &= \beta_a \\
    t_u(\varphi \land \psi) &= t_u(\varphi) \land t_u(\psi) \\
    t_u(\neg \varphi) &= \neg t_u(\varphi) \\
    t_u(X \varphi) &= X t_u(\varphi) \\
    t_u(\varphi U \psi) &= t_u(EXP_U(\varphi U \psi))
\end{align*}
\end{defn}

Next lemma assures that every formula is equivalent to another formula without any occurrence of the $until$ operator.

\begin{lem}
\[
\vdash \varphi \leftrightarrow t_u(\varphi)
\]
\end{lem}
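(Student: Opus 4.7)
The plan is to prove the statement by induction on the construction of $\varphi$. The base cases ($\top$, $N_{ab}$, $\beta_a$) and the boolean cases ($\neg\varphi$, $\varphi_1 \land \varphi_2$) are essentially immediate from Definition \ref{tu} and propositional reasoning. For $X\varphi$, the induction hypothesis gives $\vdash \varphi \leftrightarrow t_u(\varphi)$, and then the admissible Next rule (Proposition \ref{X}) yields $\vdash X\varphi \leftrightarrow X t_u(\varphi)$, which is $\vdash X\varphi \leftrightarrow t_u(X\varphi)$ by the defining clause of $t_u$ for $X$.

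The only nontrivial case is $\varphi_1 U \varphi_2$. The inductive hypothesis supplies $\vdash \varphi_1 \leftrightarrow t_u(\varphi_1)$ and $\vdash \varphi_2 \leftrightarrow t_u(\varphi_2)$, and the goal is $\vdash (\varphi_1 U \varphi_2) \leftrightarrow t_u(\varphi_1 U \varphi_2)$, which by Definition \ref{tu} equals $\vdash (\varphi_1 U \varphi_2) \leftrightarrow t_u(EXP_U(\varphi_1 U \varphi_2))$. I would first apply the reduction axiom Red.Ax.$U$ to obtain $\vdash (\varphi_1 U \varphi_2) \leftrightarrow EXP_U(\varphi_1 U \varphi_2)$, and then reduce the problem to showing $\vdash EXP_U(\varphi_1 U \varphi_2) \leftrightarrow t_u(EXP_U(\varphi_1 U \varphi_2))$.

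To handle this second equivalence I would note, as a small preliminary observation, that $t_u$ commutes with $\neg$, $\land$, $\lor$ (recall $\lor$ is an abbreviation), and $X$ directly from the recursive clauses of Definition \ref{tu}. Since $EXP_U(\varphi_1 U \varphi_2) = \bigvee_{0 \leq i \leq |\mathcal{A}|} u^i(\varphi_1 U \varphi_2)$ is built entirely from $\varphi_1$ and $\varphi_2$ using only $\land$, $X$, and $\lor$, applying $t_u$ to it simply substitutes $t_u(\varphi_1)$ for each occurrence of $\varphi_1$ and $t_u(\varphi_2)$ for each occurrence of $\varphi_2$ while preserving the outer structure. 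Two applications of Replacement of Equivalents (Proposition \ref{EQ}), one for each inductive hypothesis, then yield the needed equivalence, and transitivity of $\leftrightarrow$ closes the case.

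The main obstacle is the bookkeeping around the commutation of $t_u$ with the connectives appearing in $EXP_U$; this is what allows Proposition \ref{EQ} to be invoked cleanly, since $t_u(EXP_U(\varphi_1 U \varphi_2))$ must literally coincide with the formula obtained from $EXP_U(\varphi_1 U \varphi_2)$ by substituting the translated subformulas. Once that syntactic observation is in place, everything else is a direct combination of Red.Ax.$U$, the Next rule, and Replacement of Equivalents; no delicate argument about Until semantics is needed here because the semantic content has already been absorbed into Lemma \ref{until-expansion-soundness} and the reduction axiom.
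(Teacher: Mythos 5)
Your proposal is correct and follows essentially the same route as the paper: structural induction with the Next rule for the $X$ case, and for the Until case the syntactic observation that $t_u$ commutes with the connectives of the expansion (the paper's Claim~1), combined with Red.Ax.$U$, the induction hypothesis for $\varphi_1,\varphi_2$, and Replacement of Equivalents. The only (immaterial) difference is that you apply Red.Ax.$U$ to $\varphi_1 U \varphi_2$ and transfer the inductive equivalences inside the expansion, whereas the paper applies it to $t_u(\varphi_1)\,U\,t_u(\varphi_2)$ and transfers them inside the Until; the two chains are interchangeable.
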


\begin{proof}
Proof by induction on the length $|\varphi| =n$ of  the formula $\varphi$.

\noindent \textbf{Base case} ($n = 1$): $\varphi = N_{ab}$ or $\varphi = \beta_{a}$

This is straightforward from definition \ref{tu}.

\noindent \textbf{Induction hypothesis} ($|\varphi| \leq n$):

We have four cases:
\begin{enumerate}
    \item $\varphi = \neg \psi$: we have that $t_u(\neg \psi) = \neg t_u(\psi)$. 
    
    By the I. H., $\vdash \psi \leftrightarrow t_u(\psi)$ and\\
    $\vdash \neg \psi \leftrightarrow \neg t_u(\psi)$. Thus $\vdash \varphi \leftrightarrow  t_u(\neg \psi)$.
    
     \item $\varphi = \varphi_1 \land \varphi_2$: we have that $t_u(\varphi_1 \land \varphi_2) =  t_u(\varphi_1) \land t_u(\varphi_2)$. 
    
    By the I. H., $\vdash \varphi_1 \leftrightarrow t_u(\varphi_1)$ and $\vdash \varphi_2 \leftrightarrow t_u(\varphi_2)$.\\
     Thus $\vdash \varphi_1 \land \varphi_2 \leftrightarrow  t_u(\varphi_1) \land t_u(\varphi_2)$ and \\
     $\vdash \varphi_1 \land \varphi_2 \leftrightarrow  t_u(\varphi_1 \land \varphi_2)$.
     
      \item $\varphi = X \psi$: we have that $t_u(X \psi) = X t_u(\psi)$. 
    
    By the I. H., $\vdash \psi \leftrightarrow t_u(\psi)$ and\\
    Using using Next rule (proposition \ref{X}), $\vdash X \psi \leftrightarrow X t_u(\psi)$. \\ Thus $\vdash \varphi \leftrightarrow  t_u(X \psi)$.
    
     \item $\varphi = \varphi_1 U \varphi_2$: we have that $t_u(\varphi_1 U \varphi_2) =  t_u(EXP_U(\varphi_1 U \varphi_2))$. 
     
     \textbf{Claim 1:} $\vdash t_u(EXP_U(\varphi_1 U \varphi_2)) \leftrightarrow EXP_U(t_u(\varphi_1) U t_u(\varphi_2))$
     
      \textbf{Proof:} $EXP_U(t_u(\varphi_1) U t_u(\varphi_2)) = \bigvee \limits _{0 \leq i \leq \lvert \mathcal{A} \rvert} u^i(t_u(\varphi_1) U t_(\varphi_2))$
      
      = $u^0 \lor u^1 \lor ... = t_u(\varphi_2) \lor (t(\varphi_1) \land X t_u(\varphi_2)) \lor ...$

    Using 1., 2. and 3., we can bring the translation operator outside yielding  $\vdash t_u(EXP_U(\varphi_1 U \varphi_2)) \leftrightarrow EXP_U(t_u(\varphi_1) U t_u(\varphi_2))$.
    
    Returning to the proof of 4., we have $t_u(\varphi_1 U \varphi_2) = t_u(EXP_U(\varphi_1 U \varphi_2)) \\ = EXP_U(t_u(\varphi_1) U t_u(\varphi_2))$ by Claim 1. Using axiom Red.Ax.$U$, we have $\vdash EXP_U(t_u(\varphi_1) U t_u(\varphi_2)) \leftrightarrow (t_u (\varphi_1) U t_u(\varphi_2))$
    
    By the I. H., $\vdash \varphi_1 \leftrightarrow t_u(\varphi_1)$ and $\vdash \varphi_2 \leftrightarrow t_u(\varphi_2)$. And so\\
    $\vdash EXP_U(t_u(\varphi_1) U t_u(\varphi_2)) \leftrightarrow  (\varphi_1 U \varphi_2)$.
    
    Using Claim 1,we obtain $\vdash t_u(EXP_U(\varphi_1 U \varphi_2)) \leftrightarrow  (\varphi_1 U \varphi_2)$. Thus, \\ $\vdash t_u(\varphi_1 U \varphi_2) \varphi \leftrightarrow (\varphi_1 U \varphi_2) $.
    
\end{enumerate}
\end{proof}

\begin{defn}[Translation]
Using the reduction axioms it is possible to translate an LTL-SN formula with no Until occurrences into propositional logic.

\begin{align*}
    t(N_{ab}) &= N_{ab} \\
    t(\beta_a) &= \beta_a \\
    t(\varphi \land \psi) &= t(\varphi) \land t(\psi) \\
    t(\neg \varphi) &= \neg t(\varphi) \\
    t(X N_{ab}) &= N_{ab} \\
    t(X \beta_a) &= t(\beta_a \lor \beta_{N(a)} \geq \theta) \\
    t(X (\varphi \land \psi)) &= t(X \varphi \land X \psi) \\
    t(X \neg \varphi) &= t(\neg X \varphi) \\
    t(X X \varphi) &= t(X t(X \varphi)) \\
\end{align*}
Translation inspired by the ones from \cite{10.5555/1535423,ReductionAxiomsForEpistemicActions}.
\end{defn}

\begin{defn}[Cost measure]
\begin{align*}
    c(\beta_a) &= 1 \\
    c(N_{ab}) &= 1 \\
    c(\neg \varphi) &= 1 + c(\varphi) \\
    c(\varphi_1 \land \varphi_2) &= 1 + max(c(\varphi_1), c(\varphi_2)) \\
    c(X \beta_a) &= 4 + 2 \cdot \lvert \mathcal{A} \rvert^2 \\
    c(X \psi) &= 2 \cdot c(\psi) \\
\end{align*}
where $X \psi$ matches any formula besides those of form $X \beta_a$.
\end{defn}

\begin{lem}[Cost of majority abbreviation]
\label{cost-majority-abbreviation}
The cost of the majority abbreviation $c(\beta_{N(a)} \geq \theta)$ in a model with agent set $\mathcal{A}$ is $2 \cdot \lvert \mathcal{A} \rvert^2$.
\end{lem}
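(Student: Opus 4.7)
The plan is a direct bottom-up computation of $c(\beta_{N(a) \geq \theta})$ obtained by unfolding the abbreviation in Definition \ref{majority} and applying the cost recursion to each layer of the nested disjunction of conjunctions.

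First, I would fix a pair $(\mathcal{G}, \mathcal{N})$ and compute the cost of the inner conjunct $\bigwedge_{b \in \mathcal{N}} N_{ab} \wedge \bigwedge_{b \notin \mathcal{N}} \neg N_{ab} \wedge \bigwedge_{b \in \mathcal{G}} \beta_b$. The leaves have cost $c(N_{ab}) = c(\beta_b) = 1$ and $c(\neg N_{ab}) = 2$. Because $c(\varphi_1 \wedge \varphi_2) = 1 + \max(c(\varphi_1),c(\varphi_2))$, a left-associated chain of $k$ conjuncts whose individual costs are bounded by $m$ has cost at most $(k-1) + m$, so the whole inner conjunct (with at most $|\mathcal{A}| + |\mathcal{G}|$ leaves and $m = 2$) has cost linear in $|\mathcal{A}|$.

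Second, I would handle the outer disjunction. Since $\vee$ is defined by $\varphi \vee \psi \equiv \neg(\neg \varphi \wedge \neg \psi)$, one gets $c(\varphi \vee \psi) = 3 + \max(c(\varphi),c(\psi))$, which is again a $\max$-based recursion. Chaining over the disjuncts indexed by the admissible pairs $(\mathcal{G}, \mathcal{N})$, the cost grows by a constant per level over the maximum cost of an inner conjunct. Combining the two contributions and simplifying constants should yield the claimed value $2 \cdot |\mathcal{A}|^2$, matching the defined cost $c(X \beta_a) = 4 + 2 \cdot |\mathcal{A}|^2$ (leaving exactly enough slack for the outer disjunction with $\beta_a$ that appears in the reduction axiom Red.Ax.$X$.$\beta$).

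The hard part will be the combinatorial bookkeeping for the disjunction, since the set of admissible pairs $(\mathcal{G}, \mathcal{N})$ is in principle large; the fact that the stated bound is only quadratic in $|\mathcal{A}|$ is what forces us to exploit that both $\wedge$ and (the expansion of) $\vee$ enter the cost through $\max$ rather than through the size of the formula. In other words, the real work is to convince oneself that the cost recursion measures something close to the \emph{depth} of the formula modulo constant blow-ups, and to verify that this depth, for the specific shape of $\beta_{N(a) \geq \theta}$, evaluates to exactly $2 \cdot |\mathcal{A}|^2$.
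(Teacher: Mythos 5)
Your setup of the cost recursion is correct ($c(\varphi \lor \psi) = 3 + \max(c(\varphi),c(\psi))$, and a left-associated chain of $k$ conjuncts with leaf costs at most $m$ costs $(k-1)+m$), but the final step --- ``combining the two contributions and simplifying constants should yield $2\cdot\lvert\mathcal{A}\rvert^2$'' --- is precisely where the argument fails, and it cannot be closed along this route. The outer disjunction in Definition \ref{majority} ranges over pairs of \emph{sets} $\mathcal{G}\subseteq\mathcal{N}\subseteq\mathcal{A}$, and there are exponentially many such pairs (already the pairs with $\mathcal{G}=\mathcal{N}$ give $2^{\lvert\mathcal{A}\rvert}$ admissible disjuncts, since their ratio is $1\geq\theta$). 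Under your max-based accounting, a chained disjunction of $D$ disjuncts costs $3(D-1)$ plus the largest disjunct cost, which is exponential in $\lvert\mathcal{A}\rvert$; even re-associating the big disjunction as a balanced tree only adds depth $\Theta(\lvert\mathcal{A}\rvert)$, giving a total that is linear in $\lvert\mathcal{A}\rvert$. In neither parenthesization does the literal recursion evaluate to $2\cdot\lvert\mathcal{A}\rvert^2$, so the hope expressed in your last paragraph --- that the depth-like character of the cost measure will make the bookkeeping come out exactly quadratic --- is not realizable; you have deferred the hard part rather than resolved it.

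The paper's own proof takes a different, much coarser route: it never applies the $\max$-recursion to the big disjunction at all. It counts the disjuncts (taking their number to be at most $\lvert\mathcal{A}\rvert^2$, one factor of $\lvert\mathcal{A}\rvert$ for each of the two ``groups'' $\mathcal{G}$ and $\mathcal{N}$) and bounds each disjunct's contribution by $2\cdot\lvert\mathcal{A}\rvert$ (at most $\lvert\mathcal{A}\rvert$ network literals plus $\lvert\mathcal{A}\rvert$ behaviour literals), then combines these estimates into the figure $2\cdot\lvert\mathcal{A}\rvert^2$. If you want to reproduce the lemma as stated, you must argue at that term-and-literal counting level rather than by the literal cost recursion; if you insist on the recursion you set up, the honest conclusion is that $c(\beta_{N(a)\geq\theta})$ does not equal $2\cdot\lvert\mathcal{A}\rvert^2$, which is the discrepancy your proposal was implicitly sensing. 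Note also that what the lemma is actually used for (item 3 of Lemma \ref{translation-costs}) only requires $c(\beta_a \lor \beta_{N(a)\geq\theta}) < c(X\beta_a)$, so any value assigned consistently in both places would serve that purpose --- but establishing that is a repair of the definition of $c(X\beta_a)$, not a proof of the stated lemma.
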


\begin{proof}
The outer disjunction ranges over two groups, $\mathcal{G} \subseteq \mathcal{N} \subseteq \mathcal{A}$, therefore in the worst case there will be as many terms as $\lvert \mathcal{A} \rvert^2$ for each possible combinations of both. Now analyzing each term, the first two conjunctions contribute at most $\lvert \mathcal{A} \rvert$ propositions, and the last conjunction another $\lvert \mathcal{A} \rvert$. Therefore each term contributes, in the worst case, a cost of $2 \cdot \lvert \mathcal{A} \rvert$. Therefore, the total cost is $2 \cdot \lvert \mathcal{A} \rvert^2$.
\end{proof}

\begin{lem}[Translation costs]
\label{translation-costs}

For all $\varphi$, $\psi$:

\begin{alignat*}{3}
    &1. \  c(\varphi) &&\geq c(\psi) \  \text{if} \  \psi \in Sub(\varphi) \\
    &2. \  c(X N_{ab}) &&> c(N_{ab}) \\
    &3. \  c(X \beta_a) &&> c(\beta_a \lor \beta_{N(a)} \geq \theta) \\
    &4. \  c(X(\varphi \land \psi)) &&> c(X \varphi) \land c(X \psi) \\
    &5. \  c(X \neg \varphi) &&> \neg c(X \varphi) \\
    &6. \  c(X X \varphi) &&> c(X t(X \varphi))
\end{alignat*}
\end{lem}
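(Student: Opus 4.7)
The plan is to prove the six clauses sequentially: part 1 by structural induction on $\varphi$, and parts 2--6 by direct computation from the cost definition. I will read the right-hand sides of parts 4 and 5 as $c(X\varphi \land X\psi)$ and $c(\neg X\varphi)$ respectively, matching the corresponding clauses of the translation. The whole proof is essentially arithmetic bookkeeping; the only real analytic step is lemma \ref{cost-majority-abbreviation}, which is invoked whenever $X\beta_a$ appears on one side of an inequality.

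For part 1, I would induct on the construction of $\varphi$. The base cases $\varphi \in \{\beta_a, N_{ab}\}$ are immediate since they have no proper subformulas. For the inductive step, each clause in the cost definition is monotone in its argument: $c(\neg\psi) = 1 + c(\psi) \geq c(\psi)$, $c(\psi_1 \land \psi_2) = 1 + \max(c(\psi_1), c(\psi_2)) \geq c(\psi_i)$, and similarly $c(X\psi) \geq c(\psi)$ (using $c(X\beta_a) = 4 + 2\lvert \mathcal{A}\rvert^2 \geq 1 = c(\beta_a)$ in the special case). Composing with the inductive hypothesis covers every subformula.

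Part 2 is immediate: $c(XN_{ab}) = 2 > 1 = c(N_{ab})$. For part 3, I would unfold $\lor$ by its abbreviation and then compute
\[
c(\beta_a \lor \beta_{N(a)\geq \theta}) \;=\; 2 + \max\!\bigl(1 + c(\beta_a),\,1 + c(\beta_{N(a)\geq\theta})\bigr),
\]
applying lemma \ref{cost-majority-abbreviation} to get $c(\beta_{N(a)\geq\theta}) = 2\lvert \mathcal{A}\rvert^2$, so the right-hand side collapses to $3 + 2\lvert \mathcal{A}\rvert^2 < 4 + 2\lvert \mathcal{A}\rvert^2 = c(X\beta_a)$. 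Parts 4 and 5 each reduce to a routine case split on whether the immediate subformula has the form $\beta_a$, so that the correct branch of the $X$-cost rule applies; the strict inequality in the ``generic'' subcase comes from the constant $+1$ contributed by $\land$ or $\neg$ on the left and the factor of two on the right.

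The main obstacle is part 6, because the right-hand side $c(Xt(X\varphi))$ refers to the translation itself and is not directly read off from $\varphi$. My plan here is a sub-induction on $\varphi$: for each possible top-level connective in $\varphi$ I unfold a translation clause to identify the shape of $t(X\varphi)$ as a composite of strictly simpler translated subformulas, bound its cost by the cost clauses together with part 1, and finally compare with $c(XX\varphi) = 4\,c(\varphi)$. The critical sub-case is $\varphi = \beta_a$, where $t(X\beta_a) = t(\beta_a \lor \beta_{N(a)\geq\theta})$ and the slack $4 + 2\lvert \mathcal{A}\rvert^2$ baked into $c(X\beta_a)$ is exactly what keeps the inequality strict, tying the design of the measure back to lemma \ref{cost-majority-abbreviation}.
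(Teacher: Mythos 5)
Your reading of the garbled right-hand sides in items 4 and 5 (as $c(X\varphi\land X\psi)$ and $c(\neg X\varphi)$) is the intended one, and your treatment of parts 1--3 matches the paper's (your part 1 even covers the $X$ case, which the paper's induction omits). The genuine gap is in parts 4--6: the ``$\beta_a$ branch'' of the case split you call routine cannot be closed, because with the cost measure exactly as defined the claimed inequalities reverse there. Concretely, for part 5 take $\varphi=\beta_a$: $c(X\neg\beta_a)=2\,c(\neg\beta_a)=4$, whereas $c(\neg X\beta_a)=1+c(X\beta_a)=5+2\lvert\mathcal{A}\rvert^2>4$, so $c(X\neg\varphi)>c(\neg X\varphi)$ fails. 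Part 4 fails the same way with $\varphi=\beta_a$, $\psi=\beta_b$: $c(X(\beta_a\land\beta_b))=4$ while $c(X\beta_a\land X\beta_b)=5+2\lvert\mathcal{A}\rvert^2$. And the conjunction case of your sub-induction for part 6 breaks: for $\varphi=\beta_a\land\beta_b$ one has $c(XX\varphi)=8$, but $t(X\varphi)=t(X\beta_a)\land t(X\beta_b)$ has cost $1+(3+2\lvert\mathcal{A}\rvert^2)$, so $c(Xt(X\varphi))=8+4\lvert\mathcal{A}\rvert^2$. (Your identity $c(XX\varphi)=4\,c(\varphi)$ is also wrong for $\varphi=\beta_a$, where the value is $8+4\lvert\mathcal{A}\rvert^2$.) So no amount of the arithmetic bookkeeping you describe can finish these clauses: the additive slack in $c(X\beta_a)=4+2\lvert\mathcal{A}\rvert^2$ together with the merely doubling clause $c(X\psi)=2\,c(\psi)$ does not let an outer $X$, $\neg$ or $\land$ absorb an inner $X\beta_a$ created by pushing $X$ inward.

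For comparison, the paper's own proof never performs your case split: in cases 4--6 it applies $c(X\chi)=2\,c(\chi)$ uniformly, silently ignoring the exceptional clause for $X\beta_a$, so its computations are valid only when no $\beta_a$ sits immediately under the displaced $X$. Your instinct to split on that case is the more careful one, but it exposes that what is needed is a repaired measure rather than a longer computation: for instance, make the $X$-clause multiplicative with a factor at least the cost of the majority disjunction, e.g.\ $c(X\psi)=(4+2\lvert\mathcal{A}\rvert^2)\cdot c(\psi)$ for every $\psi$, the standard device in cost measures for reduction-axiom translations. With such a measure, clauses 2--6 do follow by exactly the computations you outline (using lemma \ref{cost-majority-abbreviation} for clause 3), and lemma \ref{translation-costs} again supports the well-foundedness of the induction in lemma \ref{translation-equivalence}; as it stands, neither your proposal nor the paper's proof establishes the lemma in the stated generality.
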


\begin{proof}
We analyze case by case:

Case $c(\varphi) \geq c(\phi)$ if $\phi \in Sub(\varphi)$:

By induction on $\varphi$.

\textbf{Base case:} if $\varphi$ is $\beta_a$ or $N_{ab}$, its complexity is $1$ and it is its only subformula.

\textbf{Induction hypothesis:} $c(\varphi) \geq c(\phi)$ if $\phi \in Sub(\varphi)$.

\textbf{Induction step:}
\begin{itemize}
    \item Negation ($\neg \varphi$):
    $\phi$ is a subformula of $\neg \varphi$, therefore $\phi$ is either $\neg \varphi$ itself or a subformula of $\varphi$. In the first case it follows directly that $c(\neg \varphi) \geq c(\phi)$. In the second case, we have that $c(\neg \varphi) = 1 + c(\varphi)$, and since $\phi$ is a subformula of $\varphi$, it follow directly by the I. H. that $c(\neg \varphi) \geq c(\phi)$.
    \item Conjunction ($\varphi \land \varphi'$):
    $\phi$ is a subformula of $\varphi \land \varphi'$, therefore $\phi$ is either $\varphi \land \varphi'$ itself or a subformula of $\varphi$ or $\varphi'$. In the first case it follows directly that $c(\varphi \land \varphi') \geq c(\phi)$. In the second case, we have that $c(\varphi \land \varphi') = 1 + max(c(\varphi), c(\varphi'))$, and since $\phi$ is a subformula of either $\varphi$ or $\varphi'$, it follow directly by the I. H. that $c(\varphi \land \varphi') \geq c(\phi)$.
\end{itemize}

Case $c(X N_{ab}) > c(N_{ab})$:
\begin{align*}
    c(X N_{ab}) &= 2 \cdot c(N_{ab}) = 2 \\
    &\text{and} \\
    c(N_{ab}) &= 1 \\
\end{align*}

Case $c(X \beta_a) > c(\beta_a \lor \beta_{N(a)} \geq \theta)$:
\begin{align*}
    c(X \beta_a) &= 4 + 2 \cdot \lvert \mathcal{A} \rvert^2 \\
    &\text{and} \\
    c(\beta_a \lor \beta_{N(a)} \geq \theta) &= c(\neg (\neg \beta_a \land \neg \beta_{N(a)} \geq \theta)) \\
    &= 1 + c(\neg \beta_a \land \neg \beta_{N(a)} \geq \theta) \\
    &= 2 + max(c(\neg \beta_a), c(\neg \beta_{N(a)} \geq \theta)) \\
    &= 2 + max(1 + c(\beta_a), 1 + c(\beta_{N(a)} \geq \theta)) \\
    &= 2 + max(2, 1 + c(\beta_{N(a)} \geq \theta)) \\
    &= 2 + max(2, 1 + 2 \cdot \lvert \mathcal{A} \rvert^2) \\
    &= 3 + 2 \cdot \lvert \mathcal{A} \rvert^2 \\
\end{align*}
$3 + 2 \cdot \lvert \mathcal{A} \rvert^2$ is less than $4 + 2 \cdot \lvert \mathcal{A} \rvert^2$.

Case $c(X (\varphi \land \psi)) > c(X \varphi \land X \psi)$:
\begin{align*}
    c(X (\varphi \land \psi)) &= 2 \cdot c(\varphi \land \psi) = 2 \cdot (1 + max(c(\varphi), c(\psi))) \\
    &= 2 + 2 \cdot max(c(\varphi), c(\psi)) \\
    &\text{and} \\
    c(X \varphi \land X \psi) &= 1 + max(c(X \varphi), c(X \psi)) = 1 + max(2 \cdot c(\varphi), 2 \cdot c(\psi)) \\
    &= 1 + 2 \cdot max(c(\varphi), c(\psi)) \\
\end{align*}
$1 + 2 \cdot max(c(\varphi), c(\psi))$ is less than $2 + 2 \cdot max(c(\varphi), c(\psi))$.

Case $c(X \neg \varphi) > c(\neg X \varphi)$:
\begin{align*}
    c(X \neg \varphi) &= 2 \cdot c(\neg \varphi) = 2 \cdot (1 + c(\varphi)) \\
    &= 2 + 2 \cdot c(\varphi) \\
    &\text{and} \\
    c(\neg X \varphi) &= 1 + c(X \varphi) = 1 + 2\cdot c(\varphi) \\
\end{align*}
$1 + 2\cdot c(\varphi)$ is less than $2 \cdot (1 + c(\varphi))$.

Case $c(X X \varphi) > c(X t(X \varphi))$:

\begin{align*}
    c(X X \varphi) &= 2 \cdot c(X \varphi) \\
    &\text{and} \\
    c(X t(X \varphi)) &= 2 \cdot c(t(X \varphi)) \\
\end{align*}

Now we show $c(X \varphi) > c(t(X \varphi))$ by induction on the length of $\varphi$.

\textbf{Base case:} if $\varphi$ has length $1$, then it is either $\beta_a$ or $N_{ab}$:
\begin{align*}
    c(X \varphi) &= 2 \cdot 1 = 2 \\
    &\text{and} \\
    c(t(X \varphi)) &= c(N_{ab}) = 1 \\
    &\text{or} \\
    c(t(X \varphi)) &= c(\beta_a) = 1 \\
\end{align*}

\textbf{Induction hypothesis:} $c(X \varphi) > c(t(X \varphi))$ for $\lvert \varphi \rvert \leq n$.

\textbf{Induction step:}

Case $\varphi = \neg \phi$:
\begin{align*}
    c(X \varphi) &= c(X \neg \phi) = c(\neg X \phi) = 1 + c(X \phi) \\
    &\text{and} \\
    c(t(X \varphi)) &= c(t(X \neg \phi)) = c(t(\neg X \phi)) = c(\neg t(X \phi)) \\
    &= 1 + c(t(X \phi)) \\
\end{align*}
By I. H., we can conclude $1 + c(X \phi) > 1 + c(t(X \phi))$, therefore $c(X \varphi) > c(t(X \varphi))$.

Case $\varphi = \phi_1 \land \phi_2$:
\begin{align*}
    c(X \varphi) &= c(X(\phi_1 \land \phi_2)) = 2 \cdot c(\phi_1 \land \phi_2) \\
    &= 2 \cdot (1 + max(c(\phi_1), c(\phi_2))) = 2 + 2 \cdot max(c(X \phi_1), c(X \phi_2)) \\
    &\text{and} \\
    c(t(X \varphi)) &= c(t(X (\phi_1 \land \phi_2))) = c(t(X \phi_1 \land X \phi_2)) \\
    &= c(t(X \phi_1) \land t(X \phi_2)) = 1 + max(c(t(X \phi_1)), c(t(X \phi_2))) \\
\end{align*}
By I. H., we can conclude \\ $2 + 2 \cdot max(c(X \phi_1), c(X \phi_2)) > 1 + max(c(t(X \phi_1)), c(t(X \phi_2)))$, therefore $c(X \varphi) > c(t(X \varphi))$.

Case $\varphi = X \phi$:
\begin{align*}
    c(X \varphi) &= c(X X \phi) = 2 \cdot c(X \phi) \\
    &\text{and} \\
    c(t(X \varphi)) &= c(t(X X \phi)) = c(t(X t(X \phi))) = \Box
\end{align*}
By I. H., we know $c(X t(X \phi)) > \Box$. $c(X t(X \phi)) = 2 \cdot c(t(X \phi))$. By I. H. we know $2 \cdot c(X \phi) > 2 \cdot c(t(X \phi)) > \Box$, therefore $c(X \varphi) > c(t(X \varphi))$.
\end{proof}

\begin{lem}[Translation equivalence]
\label{translation-equivalence}
\[
\vdash \varphi \leftrightarrow t(\varphi)
\]
\end{lem}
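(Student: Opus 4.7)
The plan is to prove the equivalence by strong induction on the cost measure $c(\varphi)$. Ordinary induction on formula length fails because the clause $t(XX\varphi) = t(Xt(X\varphi))$ does not syntactically shrink the formula (the inner $t$ can in principle produce something longer than $X\varphi$). Lemma \ref{translation-costs} is designed precisely to rescue this: items 2--6 show that the right-hand side of every defining clause of $t$ has strictly smaller cost than its left-hand side, and item 1 gives the monotonicity of $c$ under the subformula relation. Together these make strong induction on $c(\varphi)$ well-founded.

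The base cases are $\varphi = N_{ab}$ and $\varphi = \beta_a$, where $t(\varphi) = \varphi$ and the biconditional is a propositional tautology. For the inductive step I would case-split on the clause of the definition of $t$ that applies to $\varphi$, taking care to respect the pattern matching of the definition (a formula $X\neg\psi$ is first handled by the $t(X\neg\varphi)$ clause, not by the general $t(\neg\varphi)$ clause). For $\varphi = \neg\psi$ and $\varphi = \varphi_1 \land \varphi_2$ with no outer $X$, I would apply the induction hypothesis to the immediate subformulas and conclude by propositional reasoning together with proposition \ref{EQ}. For the four clauses where $X$ sits on top of a non-$X$ formula, each corresponds to a single axiom: $t(XN_{ab})$ uses Red.Ax.$X$.$N$; $t(X\beta_a)$ uses Red.Ax.$X$.$\beta$; $t(X(\varphi_1 \land \varphi_2))$ uses A8; and $t(X\neg\psi)$ uses A2. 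In each case I would first invoke the axiom to rewrite the $X$-formula, then apply the induction hypothesis to the simpler right-hand side whose lower cost is certified by items 2--5 of lemma \ref{translation-costs}, and finally close by transitivity of $\leftrightarrow$.

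The main obstacle is the nested-next clause $\varphi = XX\psi$, where $t(XX\psi) = t(Xt(X\psi))$; here I must apply the induction hypothesis twice, and to formulas that are not literal subformulas of $\varphi$. First, since $c(X\psi) < c(XX\psi)$, the induction hypothesis gives $\vdash X\psi \leftrightarrow t(X\psi)$, and the admissible Next rule of proposition \ref{X} then delivers $\vdash XX\psi \leftrightarrow Xt(X\psi)$. Second, item 6 of lemma \ref{translation-costs} supplies precisely the inequality $c(Xt(X\psi)) < c(XX\psi)$ needed to apply the induction hypothesis a second time, yielding $\vdash Xt(X\psi) \leftrightarrow t(Xt(X\psi))$, and $t(Xt(X\psi))$ is by definition of $t$ equal to $t(XX\psi)$. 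Stringing the two biconditionals together by transitivity closes this case, and with it the induction.
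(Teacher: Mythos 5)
Your proposal is correct and follows essentially the same route as the paper: induction on the cost measure $c(\varphi)$, with the reduction axioms (Red.Ax.$X$.$N$, Red.Ax.$X$.$\beta$, A8, A2) handling the outer-$X$ cases and items 2--6 of lemma \ref{translation-costs} certifying that the induction hypothesis applies. The only cosmetic difference is in the $XX\psi$ case, where you use the Next rule (proposition \ref{X}) to obtain $\vdash XX\psi \leftrightarrow Xt(X\psi)$ while the paper invokes replacement of equivalents (proposition \ref{EQ}); both yield the same conclusion.
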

\begin{proof}
By induction on $c(\varphi)$.

\textbf{Base case:} $\varphi$ is either $\beta_a$ or $N_{ab}$, translation keeps them unchanged and therefore $\vdash \varphi \leftrightarrow t(\varphi)$.

\textbf{Induction hypothesis:} For all $\varphi$ such that $c(\varphi) < n : \  \vdash \varphi \leftrightarrow t(\varphi)$.

\textbf{Induction step:}
Case for negation and conjunction:
straightforward from lemma \ref{translation-costs} item 1.

Case $X N_{ab}$:
This case follows from the $Red.Ax.X.N$ axiom, item 2 of lemma \ref{translation-costs} and the I. H..

Case $X \beta_a$:
This case follows from the $Red.Ax.X.\beta$ axiom, item 3 of lemma \ref{translation-costs} and the I. H..

Case $X (\varphi \land \psi)$:
This case follows from the $A8$ axiom, item 4 of lemma \ref{translation-costs} and the I. H..

Case $X \neg \varphi$:
This case follows from the $A2$ axiom, item 5 of lemma \ref{translation-costs} and the I. H..

Case $X X \varphi$:
By I. H. we know $\vdash X \varphi \leftrightarrow t(X \varphi)$. Therefore we can replace into $X t(X \varphi)$ by proposition \ref{EQ}, and it follows from item 6 of lemma \ref{translation-costs}.
\end{proof}

\begin{thm}[Completeness]
For every $\varphi \in \mathcal{LTL-SN}$:
\[
\models \varphi \  \text{implies} \  \vdash \varphi
\]
\end{thm}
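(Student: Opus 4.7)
The plan is to exploit the two-step translation machinery already in place to reduce completeness of LTL-SN to completeness of classical propositional logic. Given $\models \varphi$, I would first apply the Until translation to obtain $t_u(\varphi)$, which contains no Until operators. By the preceding lemma we have $\vdash \varphi \leftrightarrow t_u(\varphi)$, and by soundness this yields $\models \varphi \leftrightarrow t_u(\varphi)$, hence $\models t_u(\varphi)$. Next I would apply $t$ to obtain $t(t_u(\varphi))$, which by Lemma \ref{translation-equivalence} satisfies $\vdash t_u(\varphi) \leftrightarrow t(t_u(\varphi))$, and again by soundness $\models t(t_u(\varphi))$.

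The crucial observation is that $t(t_u(\varphi))$ is a propositional formula in the atoms $\beta_a$ and $N_{ab}$: the cost-measure argument in Lemma \ref{translation-costs} shows that each translation clause strictly decreases the cost, so the recursion is well-founded and terminates with every $X$ eliminated. Since $t(t_u(\varphi))$ contains no temporal operators, its satisfaction at position $0$ of any model $\mathcal{M} = (\mathcal{A}, N, \theta, I)$ depends only on the truth values assigned to $\beta_a$ (determined by $a \in I$) and to $N_{ab}$ (determined by $b \in N(a)$). As $I$ ranges over $2^{\mathcal{A}}$ and $N$ ranges over all irreflexive, serial, symmetric neighbourhood functions, these induced valuations are precisely the propositional valuations satisfying the three network axioms.

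From $\models t(t_u(\varphi))$ I then conclude that $t(t_u(\varphi))$ is a propositional consequence of the conjunction of the network axioms. By the completeness of classical propositional logic, there is a derivation of $t(t_u(\varphi))$ from propositional tautologies and the network axiom schemas, and hence $\vdash t(t_u(\varphi))$. Chaining the biconditionals back using modus ponens gives $\vdash t_u(\varphi)$ and finally $\vdash \varphi$, as required.

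The main obstacle I anticipate is a subtle semantic one rather than a calculational one: one must carefully justify the step ``valid in every LTL-SN model implies propositional consequence of the network axioms''. This requires observing that for every propositional valuation $v$ on $\{\beta_a, N_{ab} : a,b \in \mathcal{A}\}$ satisfying the network axioms, there is a genuine LTL-SN model whose position-$0$ truth values coincide with $v$ — take $I = \{a : v(\beta_a) = 1\}$ and $N(a) = \{b : v(N_{ab}) = 1\}$, with the network axioms guaranteeing that $N$ is legitimate. Once this correspondence is nailed down, the remaining chain of equivalences is entirely routine, and the theorem follows.
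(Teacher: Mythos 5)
Your proposal follows the same overall route as the paper: translate $\varphi$ into a propositional formula via the reduction machinery, invoke completeness of propositional logic, and chain the provable equivalences back with modus ponens. Two of your refinements are worth keeping, though, because the paper's own three-line argument glosses over them. First, you compose the two translations explicitly and work with $t(t_u(\varphi))$; the paper applies $t$ directly to $\varphi$ even though $t$ is only defined for Until-free formulas, so your order of operations (first the Until-translation lemma following Definition \ref{tu}, then Lemma \ref{translation-equivalence}) is the intended but unstated reading. Second, and more substantively, you correctly observe that $\models t(t_u(\varphi))$ does not make $t(t_u(\varphi))$ a propositional tautology --- for instance $\neg N_{aa}$ is valid over LTL-SN models but falsifiable by an unrestricted valuation --- so the paper's step ``$\mathcal{PROP} \vdash t(\varphi)$ by completeness of propositional logic'' needs exactly the repair you supply: every valuation $v$ of the atoms $\beta_a, N_{ab}$ satisfying the three network axioms is realized at position $0$ of a genuine model (take $I = \{a : v(\beta_a)=1\}$, $N(a) = \{b : v(N_{ab})=1\}$, any $\theta$), hence validity over LTL-SN models makes $t(t_u(\varphi))$ a propositional consequence of the network axioms, which are themselves axioms of LTL-SN, and propositional completeness (plus the deduction theorem over the finitely many network instances for the fixed finite $\mathcal{A}$) then gives $\vdash t(t_u(\varphi))$. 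With that correspondence in place, chaining back through the equivalence lemmas to $\vdash \varphi$ is routine; so your proof is not a different method but a tightened version of the paper's, and the tightening addresses a step the paper leaves genuinely unjustified.
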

\begin{proof}
Suppose $\models \varphi$. By lemma \ref{translation-equivalence} and since the proof system is sound, we have that $\models t(\varphi)$. The formula $t(\varphi)$ is propositional, therefore $\mathcal{PROP} \vdash t(\varphi)$ by completeness of propositional logic. We also have that $\mathcal{LTL-SN} \vdash t(\varphi)$ since $\mathcal{PROP}$ is a subsystem of LTL-SN. Again by lemma \ref{translation-equivalence} we have that $\mathcal{LTL-SN} \vdash \varphi$.
\end{proof}

\section{Model Checking}
\label{model-checking}

In this section we analyse the computational complexity of the model checking problem.

\begin{defn} The {\it model checking problem} consist of, given a formula $\phi$ and a finite  LTL-SN model ${\cal M} = ( \mathcal{A}, N , \theta, I)$, determining the set ${\cal S}(\phi) = \{ i \mid b_i \in b~ \mathrm{and} ~{\cal M} , i \models \phi \}$

\end{defn}

Next, we present the model checking algorithm for LTL-SN. Let $label(i)$ denote the set of sub-formulas that hold at path position $i$. We start by initializing $label(i) := \{\beta_a \mid a \in b_i \} \cap \{N_{ab} \mid b \in N_a \}$ for all path positions $i$.

\begin{algorithm}[H]
		
\caption{procedure Check($\phi$)} \label{procCheck}
\begin{algorithmic}
\WHILE {$|\phi|\geq 1$}
    \IF {$\phi= (\lnot\phi_1)$} \STATE $Check(\phi_1)$; $CheckNOT(\phi_1)$
    \ELSIF {$\phi= (\phi_1 \land \phi_2)$}
     \STATE $Check(\phi_1)$; $Check(\phi_2)$; $CheckAND(\phi_1,\phi_2)$
    \ELSIF {$\phi= X \phi_1$}
     \STATE $Check(\phi_1)$; $CheckX(\phi_1)$
     \ELSIF {$\phi= \phi_1 U \phi_2$}
     \STATE $Check(\phi_1)$;$Check(\phi_2)$; $CheckU(\phi_1,\phi_2)$
    \ENDIF
 \ENDWHILE
\end{algorithmic}
\end{algorithm}

\begin{algorithm}[H]
\caption{procedure CheckNOT($\phi$)} \label{procCheckNOT}
\begin{algorithmic}
\FORALL {$i$}
    \IF {$\phi\not\in label(i)$}
     \STATE $label(i):=label(i)\cup \{(\lnot\phi)\}$
    \ENDIF
 \ENDFOR
\end{algorithmic}
\end{algorithm}

\begin{algorithm}[H]
\caption{procedure CheckAND($(\phi_1 \land \phi_2$)}
\label{procCheckIMP}
\begin{algorithmic}
\FORALL {$i$}
    \IF {$\phi_1 \in label(i)$\textbf{ and }$\phi_2\in label(i)$}
     \STATE $label(i):=label(i)\cup \{(\phi_1 \land \phi_2)\}$
    \ENDIF
 \ENDFOR
\end{algorithmic}
\end{algorithm}

\begin{algorithm}[H]
\caption{procedure CheckX($\phi_1$)} \label{procCheckDIAM}
\begin{algorithmic}
\STATE $T:=\{i~|~\phi_1\in label(i)\}$
\WHILE {$T\not=\emptyset$}
    \STATE \textbf{choose $i \in T$}
    \STATE $T:=T\setminus\{i\}$
    \IF {$ X \phi_1 \not \in label(i-1)$ } 
            \STATE $label(i-1):=label(i-1)\cup \{ X \phi_1 \}$
    \ENDIF
 \ENDWHILE
\end{algorithmic}
\end{algorithm}

\begin{algorithm}[H]
\caption{procedure CheckEU($\alpha_1$,$\alpha_2$)}
\label{procCheckEU}
\begin{algorithmic}
\STATE $T:=\{i~|~\phi_2\in label(i)\}$
\FORALL {$i\in T$}
\STATE $label(i):=label(i)\cup\{E(\alpha_1\mathcal{U}\alpha_2)\}$ \ENDFOR
\WHILE {$T\neq\emptyset$}
    \STATE \textbf{choose maximum}$~i\in T$
    \STATE $T:=T\backslash\{i\}$
    \WHILE {$\phi_1 \in label(i-1)$}
    \IF {$E(\phi_1\mathcal{U}\phi_2)\not\in label(i-1)$ }
     \STATE $label(i-1):=label(i-1)\cup \{E(\phi_1\mathcal{U}\phi_2)\}$
     \STATE $i:=i-1$
     \STATE $T:=T\backslash\{i\}$
    \ENDIF
 \ENDWHILE
 \ENDWHILE
\end{algorithmic}
\end{algorithm}

\begin{thm} Given a LTL-SN model ${\cal M} = ( \mathcal{A}, N , \theta, I)$ and formula $\phi$. The computational complexity of the model checking problem is $O(\mid \phi \mid \times \mid \mathcal{A} \mid)$, i.e., linear in the size of the formula times the size of the set of agents.

\begin{proof} By lemma \ref{path-uniqueness}, there exists a unique path such that $b_0=I$. By lemma \ref{fixed-point}, this path $b_0,..., b_n$ reaches a fixed-point such that $n < |\mathcal{A}|$. 

The algorithm Check($\phi$) is called once for each sub-formula of $\phi$ which is $O(|\phi|)$ and each time it activates the algorithms CheckNOT, CheckAND, CheckX and CheckU. The algorithm CheckX, in the worse case, has to visit all $b_i$, which is $O(|\mathcal{A}|)$ The CheckU, starts with the greatest $i$ where $\phi_2$ holds, and then it goes labelling states smaller than i, where $\phi_1$ holds, with $\phi_1 U \phi_2$. In worse case, it has to search the whole model which is also $O(|\mathcal{A}|)$.  The algorithms CheckNOT, CheckAND take constant time. Thus, the complexity of Check($\phi$) is $O(|\mathcal{A}| \times |\phi|)$. In order to o build the set $S(\phi)$ we only need to search $label(i)$, for all $0 \leq i < |\mathcal{A}| $, and check if $\phi \in label(i)$.
\end{proof}
\end{thm}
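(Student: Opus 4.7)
The plan is to combine two finiteness facts about LTL-SN paths with a structural induction on $\phi$. First, Lemma \ref{path-uniqueness} gives me a unique path $b = b_0, b_1, b_2, \ldots$ starting from $I$, and Lemma \ref{fixed-point} guarantees that this path reaches a fixed-point $b_n = b_{n+1}$ with $n < |\mathcal{A}|$. Since the behavior set becomes stationary after position $n$, every formula has the same truth value at all positions beyond $n$, so the model-checking procedure only needs to maintain labels for the $O(|\mathcal{A}|)$ positions $0, 1, \ldots, n$, plus one ``tail'' position representing the stable state. The one-time initialization of $label(i)$ with $\{\beta_a \mid a \in b_i\} \cup \{N_{ab} \mid b \in N_a\}$ costs $O(|\mathcal{A}|)$ per position, which can be absorbed into the final bound.

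Next, I would analyze Check($\phi$) by observing that it recurses on each subformula of $\phi$, and that the number of subformulas is $O(|\phi|)$. Each recursive call delegates to exactly one of CheckNOT, CheckAND, CheckX, or CheckU. The propositional helpers make a single pass over all path positions with constant work per position, contributing $O(|\mathcal{A}|)$ per subformula. CheckX is similar: it scans positions whose label already contains $\phi_1$ and propagates the label $X\phi_1$ to their predecessors, again $O(|\mathcal{A}|)$ per subformula. Note that the Until operator can be handled either directly by CheckU or, alternatively, via the reduction axiom Red.Ax.$U$, but handling it directly is cheaper.

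The main obstacle will be bounding CheckU, whose nested while loops superficially suggest $O(|\mathcal{A}|^2)$ behavior. The resolution is an amortized argument: the outer loop consumes indices from the set $T$ of witnesses for $\phi_2$, while the inner loop walks backward labeling consecutive predecessors with $\phi_1 U \phi_2$ whenever $\phi_1$ holds, simultaneously removing each visited index from $T$. Hence each of the $O(|\mathcal{A}|)$ positions is visited and relabeled at most once across all iterations of both loops combined, giving a total cost of $O(|\mathcal{A}|)$ for one invocation of CheckU.

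Summing over the $O(|\phi|)$ subformulas yields the overall bound $O(|\phi| \times |\mathcal{A}|)$. Finally, constructing the output set $\mathcal{S}(\phi) = \{i \mid \phi \in label(i)\}$ requires one more linear scan through the labels, which does not affect the asymptotic total. I expect the trickiest bookkeeping to be the Until case, but apart from making the amortization explicit, the argument is routine once the fixed-point lemma has compressed the infinite path to a finite prefix of length below $|\mathcal{A}|$.
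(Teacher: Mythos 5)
Your proposal follows essentially the same route as the paper's proof: use Lemma \ref{path-uniqueness} and Lemma \ref{fixed-point} to reduce the problem to a finite path of length below $|\mathcal{A}|$, then bound Check($\phi$) by $O(|\phi|)$ subformula calls each costing $O(|\mathcal{A}|)$, plus a final linear scan to build $\mathcal{S}(\phi)$. Your explicit amortized accounting for CheckU (each position labeled at most once across both loops) and the observation that truth values stabilize past the fixed-point are welcome refinements of details the paper states more tersely, but they do not constitute a different argument.
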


\section{Conclusion and future works}
\label{conclusion-future-works}
The paper presented a class of models for the study of social networks, based on threshold models as seen across the literature. With this, we have developed a logic with the same operators as LTL, but for a restricted class of models focused specifically on the study of social networks, proved its soundness and its completeness via a translation argument, and briefly discussed model checking complexity. With this framework one is able to capitalize on existing model checking solutions for LTL, which we believe is a good incentive to pursue this line of work.

Given the previous conclusion, a natural future work is to explore these model checkers and analyze the evolution of some network instances and compare them against data from real social networks.

In regard to the logic itself, we also believe it would be interesting to pursue a branching paths model, where agents may or may not adopt a behavior after each step. This would further leverage the expressive power of LTL, however a sound and complete axiomatization will require some more work, as a reduction into propositional logic may not work anymore.

\subsection{Acknowledgements}
We would like to thank the research agencies CNPq, CAPES and FAPERJ for their support.

\bibliography{bibliography}

\end{document}